\begin{document}

\title{Signature of Glassy Dynamics in Dynamic Mode Decompositions}
\author{Zachary G. Nicolaou}
\affiliation{Department of Applied Mathematics, University of Washington, Seattle, Washington 98195, USA}
\affiliation{Department of Physics \& Astronomy, University of Tennessee, Knoxville, Tennessee 37996, USA}
\affiliation{Department of Mathematics, University of Tennessee, Knoxville, Tennessee 37996, USA}
\author{Hangjun Cho}
\affiliation{Department of Mechanical Engineering, University of Washington, Seattle, Washington 98195, USA}
\affiliation{AI Institute in Dynamic Systems, University of Washington, Seattle, WA 98195, USA}
\affiliation{Research Institute of Mathematics, Seoul National University, Seoul 08826, Republic of Korea}
\author{Yuanzhao Zhang}
\affiliation{Santa Fe Institute, Santa Fe, New Mexico 87501, USA}
\author{J. Nathan Kutz}
\affiliation{Department of Applied Mathematics, University of Washington, Seattle, Washington 98195, USA}
\affiliation{Department of Electrical and Computer Engineering, University of Washington, Seattle, Washington 98195, USA}
\affiliation{AI Institute in Dynamic Systems, University of Washington, Seattle, WA 98195, USA}
\author{Steven L. Brunton}
\affiliation{Department of Mechanical Engineering, University of Washington, Seattle, Washington 98195, USA}
\affiliation{AI Institute in Dynamic Systems, University of Washington, Seattle, WA 98195, USA}

\begin{abstract}
Glasses are traditionally characterized by their rugged landscape of disordered low-energy states and their slow relaxation towards thermodynamic equilibrium. Far from equilibrium, dynamical forms of glassy behavior with anomalous algebraic relaxation have also been noted, for example, in networks of coupled oscillators. Due to their disordered and high-dimensional nature, such systems have been difficult to study theoretically, but data-driven methods are emerging as a promising alternative that may aid in their analysis. Here, we characterize glassy dynamics using the dynamic mode decomposition, a data-driven spectral computation that approximates the Koopman spectrum. We show that the gap between oscillatory and decaying modes in the Koopman spectrum vanishes in systems exhibiting algebraic relaxation, and thus, we propose a model-agnostic signature for robustly detecting and analyzing glassy dynamics. We demonstrate the utility of our approach through both a minimal example of a one-dimensional ODE and a high-dimensional example of coupled oscillators.   
\end{abstract}
\date{\today}

\maketitle

Quantitative characterization of glassy behavior is a longstanding scientific challenge. In material physics, glasses occur in systems with a plethora of nearly degenerate low-energy configurations, leading to slow relaxation towards equilibrium, spatial disorder, and hysteresis. Frustration, wherein disharmonious interactions prevent system components from simultaneously realizing their lowest-energy configurations, often leads to glassy behavior. Progress has been made in characterizing the thermodynamics of some glassy systems such as spin glasses \cite{1987_Mezard,2003_Mezard}, but significant challenges remain, especially in systems far from equilibrium.  

Dynamical analogs of glassy behavior have long attracted interest in oscillatory systems, for example, in the frozen vortex glass of the complex Ginzburg-Landau equation \cite{2003_Brito,2017_Nicolaou}. In networks of coupled oscillators, Daido first detailed a system \cite{daido} in which the dynamics relax towards an incoherent attractor at a slow, algebraic rate, in stark contrast to the exponential relaxation typically observed for attractors in dynamical systems.  More recently, \cite{strogatz}, Ottino and Strogatz revisited Daido's model and recognized challenging, high-dimensional features in the dynamics in the glassy regime and a related and interesting volcano transition \cite{Pruser24, Pazo23, Lee24}.

\begin{figure}[b]
\includegraphics[width=\columnwidth]{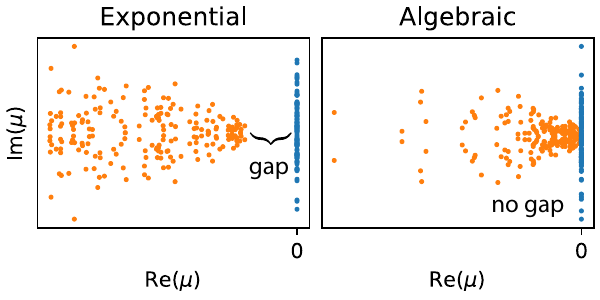}
\caption{Schematic of Koopman oscillatory (blue) and decaying (orange) eigenvalues for systems exhibiting exponential (left) and algebraic (right) decay.  The signature absence of a gap in the spectrum allows one to detect glassy dynamics. \label{fig1}}
\end{figure}
In this Letter, we propose a general strategy to detect glassy dynamics and non-exponential relaxation using the dynamic mode decomposition (DMD) \cite{Rowley09,Schmid10,Kutz16,brunton,colbrook23,Otto21}, a data-driven framework for characterizing nonlinear dynamical systems by approximating the associated, inifinte-dimensional Koopman and Perron-Frobenious linear operators \cite{Koopman31, 1980_Landau, Singh93, Mezic05, Budisic12, Bollt18, Mauroy16, Bruce19, Mezic20, Mauroy20,colbrook24-limit, Morrison24} (see Supplemental Material Sec.~S1A). DMD has shown remarkable success in a variety of applied and theoretical studies \cite{Schmid10,Rowley09,Schmid22,Rupe24,brunton} and is also being explored in the study of coupled oscillators \cite{Curtis21,Mauroy13,Hu20,Hashimoto22,Susuki11,Wilson22,Thibeault25}.  
Such glassy dynamics can be hidden in high-dimensional data and might be easy to miss from conventional analyses, especially if one lacks {\em a priori} knowledge of a good order parameter with which to quantify the scaling. In these cases, our DMD analysis offers a principled and system-agnostic way to extract algebraic relaxation dynamics from high-dimensional data. 

Specifically, we note a key signature of glassy dynamics in the Koopman spectrum, as illustrated schematically in Fig.~\ref{fig1}. 
For bounded systems, the Koopman spectrum can be divided into an oscillatory component, whose (continuous-time) eigenvalues are purely imaginary, and a decaying component, whose eigenvalues lie in the left half-plane (see Supplemental Material Sec.~S1B). If a gap exists between the oscillatory and the decaying modes, the relaxation towards the long-term behavior will eventually be dominated by the leading decaying mode with the smallest decay rate $-\mathrm{Re}(\mu)$. However, if this gap vanishes and there is an accumulation of decaying modes approaching the oscillatory component, other asymptotic behaviors are possible. 

In the case of a vanishing gap, appropriate scaling between the mode amplitude and the vanishing decay rate can enable the cumulative response of exponentially decaying modes to give rise to algebraic relaxation. 
For example, Ogielski and Stein propose to decompose a power law into an infinite number of exponentially decaying modes \cite{Ogielski85}. Mathematically, they show that the infinite sum $P(t) = \sum_{m=0}^\infty w_m e^{-\lambda_m t}$ asymptotes to $P(t) \sim t^{-\alpha}-O(e^{-t}/t)$ as $t\to\infty$ if we set $w_m = 2^{-m}$ and $\lambda_m = \frac{c(1+c)}{1-c}c^m$, where $\alpha=\frac{\log2}{\log c^{-1}}>0$ and $0<c<1$ are constants. 
Thus, observing a similar scaling in the DMD spectrum, which approximates the Koopman spectrum, and DMD mode amplitudes can serve as a principled way to determine if glassy dynamics are present in the data.  

For clarity of presentation, we demonstrate our results first on an idealized one-dimensional system before focusing in detail on the high-dimensional Daido glassy oscillator model. We demonstrate how the glassy signature enables the development of a data-driven order parameter characterizing the transition to glassy behavior based on the distribution of DMD eigenvalues. Finally, we discuss potential applications and extensions for other systems. All the numerical results in this Letter can be reproduced from our GitHub repository \cite{github}.

\textbf{Dynamic Mode Decomposition.}
We first briefly outline the DMD methods employed in our analyses. DMD gives a matrix approximation $\mathbb{K}$  of the Koopman operator 
corresponding to a dynamical system $\dot{x}=f(x)$, which is the infinite-dimensional linear composition operator governing the temporal evolution of measurement functions (see Supplemental Material, Sec.~S1A).  We are interested in the continuous case here, but data is typically available for a temporal discretization. The state $x$ here may be a scalar, a vector, or a spatiotemporal field. 
We generate multiple solution trajectories $x^j(t)$, indexed by the superscript $j$, in an effort to sample the state space well. In the extended DMD approach \cite{Williams15} that we employ, we introduce a dictionary of measurements $\mathcal{D}=\{g_1,\ldots,g_d\}$ in which we seek to observe this linear dynamics, where $g_i:\mathcal{M}\to\mathbb{C}$ are observable functions of the state variables and $\mathcal{M}$ is the state space. We consider the evolution of the vector composed of the dictionary evaluated on the data $\vec{\mathcal{D}}(x^j(t))$, concatenated over all trajectories, where the trajectories are sampled in time steps of size $dt$. The matrix $\mathbb{K}$ is the optimal linear solution for the time-varying dynamics of the dictionary terms; $\vec{\mathcal{D}}(x^j(t+dt)) \approx \mathbb{K}  \vec{\mathcal{D}}(x^j(t))$. The quality of these approximations and their convergence depend on how close the span of the dictionary terms is to a Koopman invariant subspace \cite{Otto19, Haseli22} and on how well the data represents the full dynamics of the state space \cite{Korda18}.

The DMD analysis produces an eigendecomposition of the matrix $\mathbb{K}= \Phi \Lambda \Phi^{-1}$, where $\Phi= [\phi_{\lambda_1},\ldots,\phi_{\lambda_d}]$ is the invertible matrix of eigenvectors (whose matrix elements we will denote by $\Phi_i^k$) and $\Lambda=\mbox{diag}(\lambda_1,\ldots,\lambda_d)$ is the diagonal matrix of eigenvalues, which can be used to approximate Koopman functions as $\phi_i(x) = \sum_k \Phi_i^k g_k(x)$. We can  then evaluate the time-evolution of the mode amplitudes $b_i^j(t) \equiv \phi_i(x^j(t))$, which are expected to evolve exponentially according to $b_i^j(t) \approx b_i^j(0) e^{\mu_i t}$, where $\mu_i = \log(\lambda_i)/dt$ is the continuous-time eigenvalue. If the approximations are good, the mode amplitudes will indeed evolve exponentially, and the evolution of the dictionary terms can then be reconstructed as $\hat{g}_k(x^j(t)) \approx \sum_i \tilde{\Phi}_k^i b_i^j(0)e^{\mu_i t}$ where $\tilde{\Phi}_k^i$ are the matrix elements of the inverse $\Phi^{-1}$ (which is also the matrix whose rows are the left eigenvectors of $\mathbb{K}$). Truncated reconstructions can be used to significantly compress the data required to represent the dynamics, provided the fit is good (i.e., the mode amplitude evolution is, in fact, exponential).

The basic version of extended DMD described above suffers from issues such as spectral noise in practical applications. To mitigate these issues, we employ two extensions of the method. First, we use the exact DMD approach \cite{Tu14}, which reduces the numerical dimension of the problem (and thus reduces the impacts of finite precision) by first projecting the analysis onto a minimal set of principal modes given by the singular value decomposition.  We also employ the residual DMD (resDMD) approach \cite{colbrook24,colbrook24-res}, which evaluates the quality of each mode through a residual that vanishes for true Koopman modes. The resDMD analysis also enables us to evaluate pseudospectra approximating the norm $\varepsilon$ of the resolvant of the Koopman operator, which diverges exactly on the Koopman spectrum. The pseudospectrum can help to identify neighborhoods close to Koopman eigenvalues even when the actual Koopman mode does not lie in the span of the dictionary. 

\textbf{Algebraic decay in a minimal example.}
We now consider the DMD analysis in a one-dimensional system given by
\begin{align}
\dot{\theta}&=-\sin(\theta)^\zeta. \label{eq2}
\end{align}
This example clearly illustrates the difference between exponential relaxation ($\zeta=1$) and algebraic relaxation ($\zeta=3$). The only attractor for this system is $\theta=0$.
It is easy to show analytically that the state decays asymptotically according to $\theta(t)\sim e^{-t}$ for $\zeta=1$ and $\theta(t)\sim t^{-1/2}$ for $\zeta=3$ and that a gap is present in the Koopman spectrum only in the $\zeta=1$ case (see Supplemental Material, Sec.~S2A). 
\begin{figure}[t]
\includegraphics[width=\columnwidth]{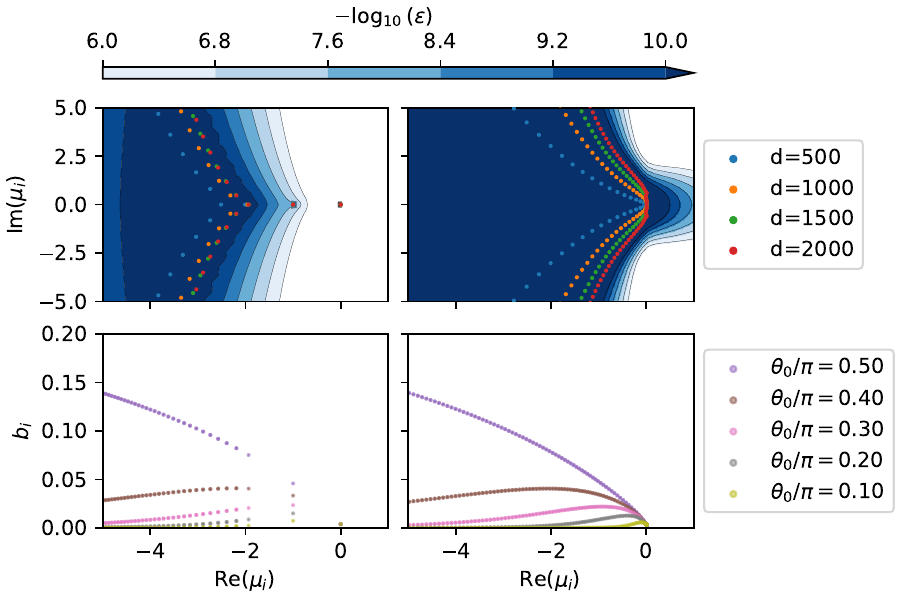}
\caption{DMD spectrum and pseudospectrum (top panels) and mode amplitudes vs decay rate (bottom panels) for Eq.~\eqref{eq2} with $\zeta=1$ (left panels) and $\zeta=3$ (right panels).  Only in the $\zeta=3$ spectrum do we observe the glassy DMD signature (i.e., an accumulation of decaying modes approaching the imaginary axis). \label{fig2}}
\end{figure}

We employ an extended resDMD analysis with a dictionary composed of Fourier-based measurements $g_k(\theta)\equiv e^{\imath k\theta }$ for integers $k$ with $-d \leq k \leq d$ (where $\imath$ is the imaginary unit), which can approximate arbitrary square-integrable 2$\pi$-periodic functions well for large $d$.  We perform the fit with combined data from $5$ trajectories $\theta_j(t)$ with initial conditions $\theta_j(0)/\pi=(0.5, 0.4, 0.3, 0.2, 0.1)$.  We confirm that the input trajectories for both $\zeta =1$ and $\zeta=3$ can be successfully reconstructed from the initial mode amplitudes and the approximate Koopman modes (see Supplemental Material Sec.~S2B), suggesting that our dictionary is sufficiently expressive.

We turn next to the qualitative features of the DMD spectrum in the two cases, as shown in Fig.~\ref{fig2}. Since the asymptotic state for Eq.~\eqref{eq2} is a fixed point, there is no oscillatory component, and the DMD spectrum consists only of decaying modes with $\mathrm{Re}(\mu)<0$. As noted above, the algebraically decaying trajectories are well-approximated by a sum of exponentially decaying Koopman modes, even though any finite sum of exponentially decaying terms will eventually decay exponentially. This apparent contradiction can be resolved by noting that the number of terms required in the exact Koopman representation may diverge provided the amplitudes decay sufficiently fast and the sum converges, as depicted in Fig.~\ref{fig1}. 

Indeed, the key difference between the spectra in Fig.~\ref{fig2} is the presence of a gap between the imaginary axis and the spectrum in the exponential case. When the gap vanishes, we also observe an associated widening and bulging of the pseudospectrum around the imaginary axis. Incidentally, we note that in the study of numerical instabilities, pseudospectral bulges like the one seen in Fig.~\ref{fig2} have also been related to algebraic growth \cite{Trefethen05}. Finally, in the algebraically decaying case, we observe a scaling between the initial mode amplitudes $b_i^j(0)$ and the decay rate $-\mathrm{Re}(\mu)$ that enables the sum of exponential terms to approximate an algebraic decay (see Supplemental Material Sec.~S2B), in contrast to the exponential case, where the leading mode quickly dominates.

\begin{figure}[b]
\includegraphics[width=\columnwidth]{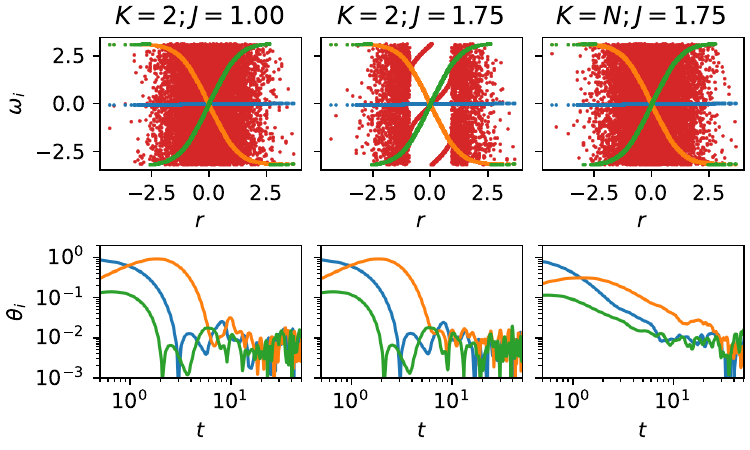}
\caption{Dynamics of three systems of $N=10000$ oscillators in Eq.~\eqref{eq1} with various coupling constants and rank, as indicated above each column. Top panels show oscillator phases vs natural frequency for the initial conditions for each training trajectory (blue, orange, and green dots) and the final state (red dots), and bottom panels show Kuramoto order parameter vs time for each training trajectory. The low rank systems ($K=2$) exhibit exponential relaxation, while the high rank system ($K=N$) exhibits algebraic relaxation. }
\label{fig3}
\end{figure}

\begin{figure*}
\includegraphics[width=2\columnwidth]{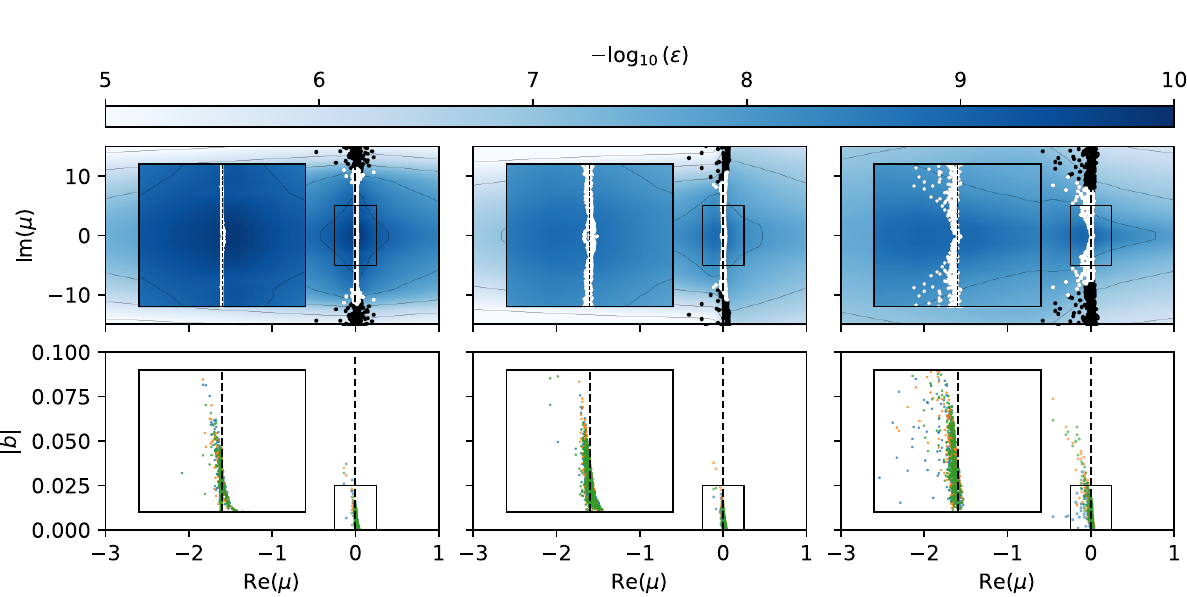}
\caption{DMD spectra for the three systems in Fig.~\ref{fig3} with $K=2$, $J=1.00$ (left panels), $K=2$, $J=1.75$ (middle panels), and $K=N$, $J=1.75$ (right panels). In the top panels, the white dots show non-spurious DMD eigenvalues that pass the resDMD criteria  $\varepsilon<5\times 10^{-8}$, black dots show spurious DMD eigenvalues, and blue shading shows the approximate $\varepsilon$-pseudospectra. Insets show a zoom of the boxed portions (continuous black lines) near the imaginary axis (dashed black lines). 
Bottom panels show the mode amplitude $|b_i^j|$ vs decay rate $\mathrm{Re}(\mu_i)$, with color corresponding to the trajectories in Fig.~\ref{fig3}. The spectrum is confined to the imaginary axis in the low rank ($K=2$) cases, but the glassy signature of accumulating decaying modes is present in the high rank ($K=N$) case.  
\label{fig4}}
\end{figure*}

\textbf{Algebraic decay in the oscillator glass.}
We next consider the generalized Daido model of an oscillator glass proposed in Ref.~\cite{strogatz}, 
\begin{align}
\label{eq1}
\dot{\theta}_n &= \omega_n +J\sum_{m=1}^N A_{nm}\sin\left(\theta_m-\theta_n\right), 
\end{align}
which describes the evolution of $N$ coupled phase oscillators with frequencies $\omega_n$ (sampled here from a standard normal distribution), $J$ represents the coupling constant, and $A_{nm} \equiv \sum_{k=1}^{K}(-1)^k u_k^{(n)}u_k^{(m)}/N$ are the components of a random adjacency matrix for a given even integer $K>0$. Here, $u_k^{(n)}$ are the components of a $K$-dimensional random vectors with each element independent and equal to $\pm 1$ with equal probability. In the large $N$ limit, $K$ is the rank of the adjacency matrix. Since the signs of the adjacency matrix elements are random, the couplings include both attractive and repulsive terms, leading to frustration.  In the low-rank regime, an interesting volcano transition has been characterized by a low-dimensional Ott-Antonsen reduction \cite{Ott08, Ott09}. However, Daido's glassy states appear only in the high-rank regime, where no dimension reduction is available.

We implement an efficient GPU integration scheme to generate data from $N=10000$ coupled oscillators at different combinations of coupling constants and adjacency matrix ranks. Our simulations enable us to investigate larger systems than previously studied, and we note a few new results about the large $N$ systems in Sec.~S3A of the Supplemental Material. Different relaxation behaviors are observed in various parameter regimes, as shown in Fig.~\ref{fig3}. We produce trajectories from three initially ordered states that relax towards disordered states. These initial conditions consist of identical initial phases $\theta_n(0)=0$ and initial phases with a single winding around the circle $\theta_n(0)=\pm (-\pi+2\pi n/N)$, where the oscillator index is ordered by the natural frequencies. The Kuramoto order parameter,
\begin{equation}
\label{kuramoto}
r = \left \lvert\frac{1}{N}\sum_{n=1}^N e^{\imath\theta_n}\right \rvert,
\end{equation}
has traditionally been used to quantify the relaxation. For low-rank systems with a low coupling strength (left column in Fig.~\ref{fig3}), initial conditions decay exponentially towards an incoherent state. For low-rank systems above the critical coupling threshold (middle column in Fig.~\ref{fig3}), interesting volcano clusters appear in the final state, but the relaxation remains exponential. Finally, for high-rank systems with a large coupling constant (right column in Fig.~\ref{fig3}), the slow relaxation toward incoherence follows a power law, exhibiting glassy dynamics. 

We can again express arbitrary periodic measurement functions with Fourier basis elements $g_{\nu_k}(\theta)=e^{\imath \sum \nu_k \theta_k}$ for each multi-index $\nu_k\in \mathbb{Z}^N$. Indeed, in the uncoupled ($J=0$) case, each Fourier basis element $g_{\nu_k}$ is an exact oscillatory Koopman eigenfunction with eigenvalue $\mu_{\nu_k}=\imath \sum_k \nu_k \omega_k$ (see Supplemental Material Sec.~S3B). We base our extended DMD dictionary on this basis, but the high dimensionality of the state space imposes significant limitations on our dictionary. We restrict our dictionary to include only some of the lowest-order multi-indices. To enable us to study the dependence with increasing dictionary size, we include those Fourier modes with $\nu_k = \pm m\delta_k^n$, where $\delta_k^n$ is the Kronecker delta, $n=1,2,\cdots,N$, and $m=1,2,\cdots M$, leading to $2NM$ dictionary terms. We also include a random selection $NM$ of pairwise terms of the form $\nu_k = \pm \delta_{i(k)}^n \pm \delta_{j(k)}^n$, where $i(k)$ and $j(k)$ are sampled from all pairs of distinct choices without replacement, leading to a total dictionary size of $D=3NM$ terms. 
Figure~\ref{fig4} shows our results for the largest dictionary considered with $D=200000$ terms for the three trajectories from Fig.~\ref{fig3}. Since the attractors are not steady states, we expect the presence of oscillatory modes in the Koopman spectra, which we indeed observe in the resDMD approximations close to the imaginary axis. 

Qualitative differences between the results for the exponentially relaxing cases and the glassy, algebraic relaxation are apparent in the results, but the interpretation is somewhat confounded by spectral noise in the DMD approximation of the Koopman spectrum. Nevertheless, we can clearly observe the convergence of the oscillatory component of the spectrum to the imaginary axis as the dictionary size increases (see Supplemental Material Sec.~S3C).  Only in the glassy case with algebraic decay (right column in Fig.~\ref{fig4}) do we see an additional accumulation of decaying modes near the imaginary axis that do not converge to the imaginary axis as the dictionary size increases.  Indeed, the DMD spectrum faithfully captures the exponential and algebraic behavior in the reconstruction of the Kuramoto order parameter,
 \begin{align}
\hat{r}^j(t) = \sum_{i,k} \left|\frac{1}{N} \tilde{\Phi}_k^ib_i^j(0)e^{\mu_i t}\right|, \label{kuramoto}
\end{align}
and thus the DMD reconstruction can effectively represent the glassy dynamcis observed in the oscillator system. 
To avoid overfitting and to mitigate spectral noise in the reconstruction, we include only the minimum necessary number of small-residue modes to reconstruct the Kuramoto order parameter to a relative accuracy of $10^{-3}$, with a residue threshold of $\varepsilon<5\times 10^{-8}$ (white dots in Fig.~\ref{fig4}). 

It is interesting to note that in the nonglassy regimes (left two columns in Fig.~\ref{fig4}), the DMD spectrum does not appear to include any decaying modes and consists instead of a large number of purely oscillatory modes. The incoherent summation of these oscillator modes can nevertheless give rise to an exponential relaxation towards the incoherent state. This can occur via the generalized Landau damping phenomenon noted previously in the nonlinear Fokker-Planck description of the Kuramoto system \cite{strogatz_1992}, where an analytical continuation argument reveals ``fake'' (decaying) eigenvalues responsible for the decay. (We note an explicit relationship between our DMD spectrum and the nonlinear Fokker-Planck equation in the Supplemental Material Sec.~S3D.) In the glassy regime, on the other hand, we do observe the signature of a large number of decaying DMD modes accumulating near the imaginary axis. Crucially, in both the glassy and nonglassy regimes, we emphasize that the distribution of decaying modes in our order parameter reconstruction reproduces the observed exponential and algebraic decay rates very accurately (see Supplemental Material Sec.~S3C).

While the asymptotic behavior in the Kuramoto order parameter has traditionally been used to assess the potential for glassy dynamics, several issues have become apparent that have complicated analysis, stymied development in the literature, and led to contentious and unproductive discussions. Since only finite-size systems can be computationally simulated, the thermodynamic asymptotic relaxation is only approximately observed over a finite time window before the Kuramoto order parameter approaches the finite-size noise floor. Thus, the exact location in parameter space deliminating the glassy dynamical behavior has remained obscured.  Rather than using the DMD reconstruction of the Kuramoto order parameter in Eq.~\eqref{kuramoto}, we introduce a new, data-driven order parameter, which better quantifies the onset of glassy dynamics in the oscillator glass.  Our data-driven order parameter $\eta$ is simply given by the average negative real part of the DMD eigenvalues that pass the residual DMD criteria $\varepsilon<5\times10^{-8}$ (i.e., the white dots in Fig.~\ref{fig4}),
\begin{equation}
\eta \equiv -\langle \mathrm{Re} (\mu_i) \rangle \rvert_{\varepsilon_i<5\times10^{-8}}.
\end{equation}

While we would prefer to evaluate $\eta$ using the exact Koopman spectrum, in practice, we can only calculate the DMD eigenvalues, which, as noted above, are close to but not exactly equal to the Koopman eigenvalues. Given this source of spectral noise, some DMD eigenvalues actually appear to the right of the imaginary axis, implying a ``forbidden'' exponentially-growing DMD mode in a bounded system.  Such eigenvalues can even pass the residual DMD test as long as the inverse of their positive real part is small compared to the time length of the input trajectories. We can thus take the average real part of those positive DMD eigenvalues as a crude estimate for the spectral uncertainty for the individual eigenvalues, which we divide by the square root of the number of eigenvalues included in the mean and use to construct random error bounds for the order parameter. 

\begin{figure}
\includegraphics[width=\columnwidth]{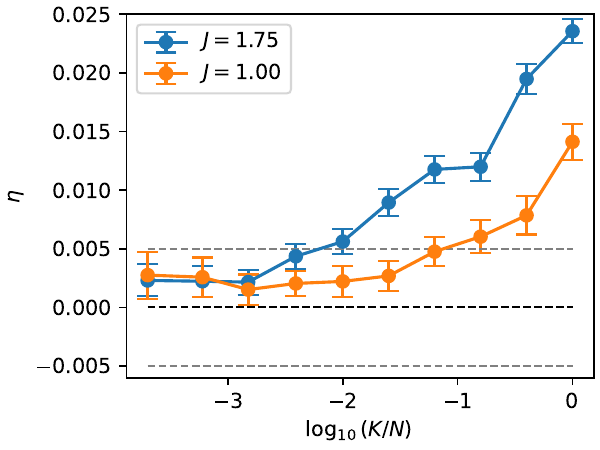}
\caption{Data-driven order parameter $\eta$ vs logarithmic rank $\log_{10}(K/N)$, signaling the onset of glassy dynamics. Error bars show the estimated order parameter random error bounds derived from average of the DMD eigenvalues with positive real parts, and the gray dashed lines show the more conservative finite-size systematic error bounds. \label{fig_order}}
\end{figure}

Figure \ref{fig_order} shows the order parameter $\eta$ vs. the rank $K$ of the network for two systems with different coupling constants. Note that the order parameter $\eta$ is typically slightly larger than zero (black dashed line in Fig.~\ref{fig_order}) even for small rank systems.  We attribute this nonzero value to systematic errors in the DMD spectrum due to finite-size effects, including the fact that the span of the finite dictionary terms does not exactly contain any Koopman invariant subspaces and the trajectories only span a finite time. These systematic errors appear to be conservatively bounded by a window of size equal to the smallest detectable decay rate $1/T$ determined by the time span of the input trajectories (grey dashed lines in Fig.~\ref{fig_order}). It is clear that the order parameter is nonzero for systems with sufficiently large rank, and, furthermore, that this onset of the glassy dynamics occurs at a larger rank when then coupling constant is smaller. Thus, the transition to glassy decay can be quantitatively delimited using our data-driven order parameter.

\textbf{Discussion.}
Leveraging a key feature in the Koopman spectrum (the presence or absence of a spectral gap), the dynamic mode decomposition and pseudospectral estimates can be utilized to define a data-driven order parameter for distinguishing exponential and algebraic dynamical relaxation, as relevant in the study of glassy systems far from equilibrium. Our results support previous findings \cite{daido,strogatz} of glassy dynamics in Daido's frustrated oscillator model, which were only established through significant domain expertise and specialized knowledge of coupled oscillators. While we focused on Daido's oscillator glass here for clarity, the approach of deriving order parameters from DMD spectrum or other data-driven methods can be relevant to other systems as well, and we expect it may be fruitfully applied in the study of, e.g., other disordered systems of oscillators \cite{Nakagawa93} and neural networks \cite{Kadmon15,Mastrogiuseppe18}. This work contributes to the growing body of literature utilizing data-driven approaches for qualitative system characterization \cite{Bruton17, Bollt18, Gottwald20, Hashimoto22, Rupe24,colbrook24-res}. 

One central challenge in our approach is the uncertainty inherent in the DMD estimates of the Koopman spectrum, which obscures the distinction between the oscillatory and decaying parts of the spectrum. Future studies would benefit from better theoretical bounds for quantifying the uncertainty of the DMD estimates. Bootstrapping or bagging estimates of the uncertainty may aid in this regard. Finally, we note that other data-driven techniques, such as applying DMD eigenvalue constraints \cite{Kutz18}, physics-informed constraints \cite{Yin24}, and sparse system identification \cite{Dawson22}, could prove useful in future studies. 

\textbf{Acknowledgements.}
The authors gratefully acknowledge Steven H. Strogatz and Matthew J. Colbrook for insightful discussions related to this work. Funding was provided by the Washington Research Foundation Postdoctoral Fellowship [ZGN], the National Research Foundation of Korea (NRF) grant funded by the Korea government (MSIT) (RS-2023-00253171) [HC], the Santa Fe Institute Omidyar Fellowship and National Science Foundation (NSF DMS 2436231) [YZ], and the National Science Foundation AI Institute in Dynamic Systems (grant number 2112085) [HC, JNK, SLB].

\clearpage

\fontsize{11}{2}
\setcounter{figure}{0}    
\setcounter{equation}{0}    
\fontfamily{cmr}\selectfont
\renewcommand{\baselinestretch}{1.5} 
\onecolumngrid
\renewcommand{\thefigure}{{S\arabic{figure}}}
\renewcommand{\thesection}{{S\arabic{section}}}
\renewcommand{\theequation}{{S\arabic{equation}}}
\newtheorem{theorem}{Theorem}
\newtheorem{lemma}{Lemma}

\begin{center}
\bf \normalsize \MakeUppercase {Supplemental Material to ``Signature of Glassy Dynamics in Dynamic Mode Decompositions''}
\date{\today}
\end{center}

\tableofcontents

\section{DMD and Koopman analysis}
In this section, we first provide a general description of the Koopman operator and next note some specific results for the class of Kuramoto systems considered in the main text.

\subsection{Brief introduction to Koopman operator and DMD}

\subsubsection{Koopman operator}
Let $(\mathcal{M},\mathcal{A},m)$ be the state space with the Lebesgue measure $m$ where $\mathcal{M} \subset \mathbb{R}^N$ (or the torus $\mathbb{T}^N$). We consider the discrete-time system 
\begin{align}\label{A-10}
\mathbf{x}(n+1) = T(\mathbf{x}(n)),\quad \mathbf{x}(0) = \mathbf{x}_0,
\end{align}
where $T:\mathcal{M}\to\mathcal{M}$ is a transformation. 
Let $\mathcal{F}$ be the set of (real or complex-valued) measurable functions on $\mathcal{M}$ and suppose that the transformation $T$ is non-singular. A measurable transformation $T$ is said to be non-singular if $m(T^{-1}(S)) = 0$ whenever $m(S)=0$, $S\in\mathcal{A}$.
The a linear transformation $U_T:{\rm L}^2(\mathcal{M})\to {\rm L}^2(\mathcal{M})$, called a composition operator or a Koopman operator \cite{Koopman31}, defined as the following is well-defined \cite{Singh93}: 
\begin{align*}
U_T g \coloneqq g\circ T,\quad g\in\mathcal{F}.
\end{align*}
Now, we consider an autonomous dynamical system
\begin{align}\label{SM-A-1}
	\dot{\mathbf{y}}(t) = F(\mathbf{y}(t)),\quad t\geq 0,\quad \mathbf{y}(0)=\mathbf{y}_0,
\end{align}
with a vector field $F:\mathcal{M}\to\mathcal{M}$ assuming that the vector field $F$ is sufficiently regular. We denote the flow by $\Phi^{t}:\mathcal{M}\to\mathcal{M}$ defined as $\Phi^{t}(\mathbf{y}_0) = \mathbf{y}(t)$, i.e., $\Phi^t(\mathbf{x}) = \mathbf{y}(t)$ for $\mathbf{x}\in\mathcal{M}$ where $\mathbf{y}(t)$ is the value of the flow $t\mapsto \mathbf{y}(t)$ starting at $\mathbf{y}(t_0)=\mathbf{x}$. 
Fix a constant ${dt}>0$.
When the map $\Phi^{dt}$ is non-singular, the Koopman operator $\mathcal{K}^{dt}:{\rm L}^p(\mathcal{M}) \to {\rm L}^p(\mathcal{M})$ is defined as $\mathcal{K}^{dt} g \coloneqq g\circ \Phi^{dt}$. We call such functions $g$ observables in the DMD context. We note that the Koopman operator is an infinite-dimensional linear operator, and we will study its spectrum.
With a fixed time $dt>0$, let $\varphi$ be an eigenfunction of the Koopman operator with an eigenvalue $\lambda$. Then,
\begin{align*}
\varphi(\mathbf{x}_k) = (\mathcal{K}^{dt})^{(k)} \varphi(\mathbf{x}_0) = \lambda^k \varphi(\mathbf{x}_0),\quad k=1,2,\ldots,
\end{align*}
where $f^{(k)} = f\circ \cdots \circ f$. In the continuous time case, in literature, an eigenfunction $\varphi$ with an eigenvalue $\mu$ is defined as
\begin{align*}
\mathcal{K}^t \varphi(\mathbf{x}) = e^{\mu t}\varphi(\mathbf{x}),\quad t\geq0,\quad \mathbf{x}\in\mathcal{M}.
\end{align*}
We will call $\mu$ the continuous eigenvalue in the main context.

We can consider a continuous-time formulation by directly evaluating the discrete-time Koopman operator for infinitesimal $t=dt$, yielding the eigenpair $(\varphi,e^{\mu {dt}})$ of $\mathcal{K}^{dt}$. The continuous-time formulation is useful, as it allows us to pose the eigenfunction problem for continuously differentiable, non-constant eigenfunctions via the differential equation:
\[
\frac{d}{dt}\mathcal{K}^t\varphi(\mathbf{x}(0)) = \frac{d}{dt}\varphi(\mathbf{x}(t)) = \nabla \varphi(\mathbf{x}(t)) \cdot F(\mathbf{x}(t)).
\]
In particular, $\frac{d}{dt}\varphi(\mathbf{x}(t)) = \mu \varphi(\mathbf{x}(t))$. Furthermore, the formula $\mu = \log(\lambda)/{dt}$ may emphasize the graphical feature of eigenvalue distribution due to its scale. In the following, our Koopman operator will be $\mathcal{K}^{dt}$ with a fixed time step $dt$, but we will use the formula $\mu = \log(\lambda)/{dt}$ when we describe our numeric results, and will estimate the differential equation above when we do Koopman analysis.

On the other hand, the set $\{\mathcal{K}^t\}_{t\geq0}$ forms a one-parameter semigroup, and when the dynamics is sufficiently smooth, it admits its infinitesimal generator $\mathcal{L}$. Note that $\mu$ is an eigenvalue of the infinitesimal generator $\mathcal{L}$. The operator $\mathcal{L}$ has been called the Lie operator, and its adjoint, $\mathcal{L}^\dagger$, is called the Liouville operator. We continue the discussion on these operators in Sec.~SIIIB and Sec.~SIIIC. 

\subsubsection{Dynamic Mode Decomposition}
The most basic form of DMD finds a linear matrix operator $A$ satisfying $\mathbf{x}(n+1) \approx A\mathbf{x}(n)$ for each $n$. This is a linear approximation of time-dependent modes of the dynamics and is unable to capture the nonlinear phenomena generally.
To tackle this issue, Williams et al. \cite{Williams15} introduced the Extended DMD, leveraging the usage of nonlinear observables. This Extended DMD is a finite-dimensional approximation of the Koopman operator and can be viewed as a Galerkin method.

For the given (finite) snapshot data $\{(\mathbf{x}_i,\mathbf{y}_i)\}_{1\leq i \leq M}$ satisfying $T(\mathbf{x}_i) = \mathbf{y}_i$, we will approximate the Koopman operator using Extended DMD. Let $\mathcal{D}=\{g_1,\ldots,g_L\}$ be the set of observables $g_i:\mathbb{M}\to \mathbb{C}$, called dictionary, and denote its vector form (with the index order) by $\vec{\mathcal{D}}=[g_1,\ldots,g_L]^T$ and the spanned space $V_L\coloneqq \mbox{span}(\mathcal{D})$. We note that the Koopman operator $\mathcal{K}^{dt}$ satisfies $\mathcal{K}^{dt}g_i(\mathbf{x}_j) = g_i(\mathbf{y}_j)$. To build matrix representations, we denote
\begin{align*}
    \Psi_X = \begin{bmatrix}
        \vec{\mathcal{D}}(\mathbf{x}_1) &\vec{\mathcal{D}}(\mathbf{x}_2)&\cdots & \vec{\mathcal{D}}(\mathbf{x}_M)
    \end{bmatrix}\in\mathbb{C}^{L\times M},\quad
    \Psi_Y = \begin{bmatrix}
        \vec{\mathcal{D}}(\mathbf{y}_1)&\vec{\mathcal{D}}(\mathbf{y}_2)&\cdots & \vec{\mathcal{D}}(\mathbf{y}_M)
    \end{bmatrix},
\end{align*}
where $\vec{\cal D}(\mathbf{x})= \begin{bmatrix}g_1(\mathbf{x}) &g_2(\mathbf{x})\cdots&g_L(\mathbf{x})\end{bmatrix}^T$. Now we seek a minimizer of the minimization problem 
\[ \min_{K\in \mathbb{C}^{L\times L}} \| \Psi_Y - K\Psi_X \|_2,\]
and denote one solution by $\mathbb{K} = \Psi_X^\dagger \Psi_Y$, which is the matrix representation of the Koopman operator in the Extended DMD context. We call the eigenvectors of $\mathbb{K}$ the Koopman modes (or DMD modes) in the main context, and each mode corresponds to a particular eigenvalue of $\mathbb{K}$ 

This extended DMD approximates the Koopman operator $\mathcal{K}^{dt}$ by $\mathcal{K}_{L,M}:V_L \to {\rm L}^2(\mathcal{M})$, defined as $\mathcal{K}_{L,M} \varphi \coloneqq (\mathbb{K}\mathbf{c}_\varphi)\cdot \vec{\mathcal{D}}$ for $\varphi = \mathbf{c}_\varphi \cdot \vec{\mathcal{D}} \in V_L$ and $\mathbf{c}_\varphi\in\mathbb{C}^L\setminus\{0\}$. According to the convergence theory \cite{Korda18}, appropriate data sampling justifies the convergence in the large data limit, $\lim_{M\to\infty} \|\mathcal{K}_{L, M}f - \mathcal{K}_Lf\|$ for $f\in V_L$ where $\mathcal{K}_L\coloneqq \mathcal{P}_L\mathcal{K}|_{V_L}$, $\mathcal{P}_L:{\rm L}^2\to V_N$ is the ${\rm L}^2$-projection, and the norm $\|\cdot\|$ is the any norm in $V_L = \mbox{span}(\mathcal{D})$. Again, appropriate assumption on dictionary gives us convergence $\|\mathcal{K}_L - \mathcal{K}^{dt} \|_{{\rm L}^2(\mathcal{M})}\to0$ as $L\to\infty$. Concerning the first limitation, we could take the essential supremum norm over $V_L$, and then the smoothness of observables we will use gives the supremum norm and pointwise convergence. Concerning the second limit in ${\rm L}^2$, we might not have pointwise convergence, but there exists a subsequence $\{ L_k\}$ such that $\mathcal{K}_{L_k}$ converges to $\mathcal{K}$ almost everywhere.

For an observable $\varphi\in\mathcal{F}$, we approximate its multi-step prediction along the dynamics as follows
\begin{align*}
\varphi(T^{(n)}(\mathbf{x}_1)) = [\mathcal{K}^{(n)}\varphi](\mathbf{x}_1) \approx [\mathcal{K}^{(n)}_L \varphi](\mathbf{x}_1) \approx [\mathcal{K}^{(n)}_{L,M}\varphi](\mathbf{x}_1).
\end{align*}

The residual DMD \cite{colbrook24-res} eliminates spurious eigenvalues whose residual exceeds the manually selected threshold $\varepsilon$. Precisely, we numerically compute the $\varepsilon$-pseudospectrum of the Koopman operator $\mathcal{K}$, $\sigma_\varepsilon(\mathcal{K})=\{\lambda\in\mathbb{C}:\|(\mathcal{A}-\lambda I)^{-1}\|\geq \varepsilon^{-1} \}$. The $\varepsilon$ pseudospectrum always contains the spectrum of the Koopman operator, $\sigma(\mathcal{K})$; $\sigma_\varepsilon(\mathcal{K}) \supset \sigma(\mathcal{K})$,  and, in fact, $\cap_{\varepsilon\geq0}\sigma_\varepsilon(\mathcal{K}) = \sigma(\mathcal{K})$ \cite{Trefethen05}.

Lastly, we introduce the exact DMD, which has become the standard DMD and leverages the singular value decomposition (SVD) to get computational accuracy and efficiency. Given data matrices $X,Y\in\mathbb{R}^{d\times M}$, DMD computes the eigendecompoisiton of the best-fit linear operator $A$ relating $Y\approx AX$ and set $A = YX^\dagger$ \cite{Tu14}. The eigenvectors of $A$ are called the DMD modes or dynamic modes. The exact DMD starts with the truncated SVD $X\approx U\Sigma V^*$, $U\in \mathbb{C}^{d\times r}$, $\Sigma\in\mathbb{C}^{r\times r}$, $V\in\mathbb{C}^{M\times r}$. Here, $*$ is the conjugate transpose, $r$ is less than or equal to the rank of the reduced SVD approximation to $X$, $U$ and $V$ have orthonormal columns. In this case, $A=YV\Sigma^{-1}U^*$, but in practice, it is efficient computationally to compute $\tilde{A}=U^*YV\Sigma^{-1}$. We compute the eigendecomposition $\tilde{A}W = W\Lambda$ where the columns of $W$ are eigenvectors and $\Lambda$ is a diagonal matrix of eigenvalues. Then, $\Lambda$ is the eigenvalues of $A$ and the columns of $\Phi = YV\Sigma^{-1}W$ are eigenvectors of $A$. Indeed, we can see
\begin{align*}
    A \Phi = YV\Sigma^{-1}\tilde{A} W = \Phi \Lambda.
\end{align*}

\subsection{Well-definedness for Kuramoto systems and basic property}
We closely look at the definition of the Koopman operator for the Kuramoto system as a composition operator. As before, we fix a positive constant $dt>0$.
\begin{lemma}[{\cite[Theorem 2.11]{Singh93} } ] \label{T1.1}
Let $(X,\mathcal{A},m)$ be a $\sigma$-finite measure space and let $T:X\to X$ be a measurable transformation. Then $T$ induces a bounded linear composition operator $U_T$ on ${\rm L}^2$ if and only if there exists a constant $C >0$ such that
\begin{align*}
m (T^{-1}(S)) \leq C \cdot m(S),\quad \forall~ S\in\mathcal{A}.
\end{align*}
\end{lemma}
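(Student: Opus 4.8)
The plan is to translate everything into a statement about the pushforward (image) measure $\nu \coloneqq m\circ T^{-1}$ on $\mathcal{A}$, defined by $\nu(S)=m(T^{-1}(S))$; since $T$ is measurable, $T^{-1}$ commutes with complements and countable unions, so $\nu$ is indeed a measure. In this language the displayed inequality says exactly that $\nu$ is dominated by $m$, i.e.\ $\nu\le C\,m$ on $\mathcal{A}$.

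\textbf{Sufficiency.} Assume $\nu(S)\le C\,m(S)$ for all $S$. This already forces $T$ to be non-singular ($m(S)=0\Rightarrow\nu(S)=0$), so $g\circ T$ is unchanged $m$-a.e.\ when $g$ is replaced by an $m$-a.e.\ equal representative (if $g=g'$ off a null set $N$, then $g\circ T=g'\circ T$ off $T^{-1}(N)$, and $m(T^{-1}(N))\le C\,m(N)=0$); hence $U_T$ is well defined on equivalence classes, and $g\circ T$ is measurable as a composition. Next I would prove the elementary change-of-variables identity $\int_X(h\circ T)\,dm=\int_X h\,d\nu$: it holds for $h=\mathbf{1}_S$ by the definition of $\nu$, extends to nonnegative simple $h$ by linearity, and to all nonnegative measurable $h$ by monotone convergence. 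Combining this with the domination $\int_X h\,d\nu\le C\int_X h\,dm$ (again obtained on simple functions from $\nu\le C\,m$ and lifted by monotone convergence) and specializing to $h=|g|^p$ yields $\|U_Tg\|_p^p=\int_X|g\circ T|^p\,dm\le C\,\|g\|_p^p$, so $U_T$ maps $L^p$ into itself and is bounded with $\|U_T\|\le C^{1/p}$.

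\textbf{Necessity.} Assume $T$ induces a composition operator $U_T$ on $L^p$. If one reads this only as ``$U_T$ is an everywhere-defined linear map on $L^p$,'' boundedness is automatic from the closed graph theorem: if $f_n\to f$ and $U_Tf_n\to h$ in $L^p$, pass to a common subsequence converging $m$-a.e.; since well-definedness of $U_T$ already entails non-singularity (test it on $\mathbf{1}_N$ for a null set $N$), one gets $f_n\circ T\to f\circ T$ a.e.\ along that subsequence, whence $h=U_Tf$ a.e. With boundedness in hand, test on indicators: for $S$ with $m(S)<\infty$ we have $\mathbf{1}_S\in L^p$ and, pointwise, $\mathbf{1}_S\circ T=\mathbf{1}_{T^{-1}(S)}$, so
\[
m(T^{-1}(S))=\|\mathbf{1}_{T^{-1}(S)}\|_p^p=\|U_T\mathbf{1}_S\|_p^p\le\|U_T\|^p\,\|\mathbf{1}_S\|_p^p=\|U_T\|^p\,m(S).
\]
When $m(S)=\infty$ the inequality is trivial, so $C=\|U_T\|^p$ (enlarged if necessary to be strictly positive) works.

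\textbf{Where the difficulty sits.} Once the correct object---the pushforward measure $\nu=m\circ T^{-1}$ and its domination by $m$---is isolated, the rest is routine measure theory, so the only points requiring care are (i) proving the change-of-variables identity $\int(h\circ T)\,dm=\int h\,d\nu$ from scratch on simple functions and lifting it by monotone convergence, and (ii) bridging ``$U_T$ defined everywhere on $L^p$'' and ``$U_T$ bounded'' via the closed graph theorem, which quietly relies on non-singularity. The $\sigma$-finiteness hypothesis is not needed for the inequalities themselves but is the natural ambient setting, under which the condition is equivalent to $d\nu/dm\in L^\infty(m)$ with $\|d\nu/dm\|_\infty\le C$.
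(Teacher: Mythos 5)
Your proof is correct; the paper itself gives no argument here, simply deferring to Singh--Manhas, and your write-up is essentially the standard proof from that source: sufficiency via the pushforward measure $m\circ T^{-1}\le C\,m$ and the change-of-variables identity, necessity via automatic boundedness (closed graph, using the non-singularity built into well-definedness) tested on indicators of finite-measure sets. No gaps worth flagging.
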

\begin{proof}
The proof of this theorem may be found in \cite{Singh93}.
\end{proof}
We recall that our state space of the Kuramoto system is the measure space $(\mathbb{T}^N,\sigma(\mathbb{T}^N),m)$ where $\sigma(\mathbb{T}^N)$ is the Borel sigma algebra and $m$ is the (induced) Lebsegue measure on the tori $\mathbb{T}^N$. 
\begin{lemma}
The Koopman operator $\mathcal{K}^{dt}$ corresponding to the Kuramoto system is a bounded linear operator on ${\rm L}^2(\mathbb{T}^N)$.
\end{lemma}
\begin{proof}
Since the vector field of the Kuramoto system is real analytic, the solution operator $\Phi^{dt}:\mathbb{T}^N \to \mathbb{T}^N$ is a diffeomorphism
\begin{align}\label{A-1}
(\Phi^{dt})^{-1} \circ (\Phi^{dt}) = \mbox{id}_{\mathbb{T}^N}.
\end{align}
In the proof of Liouville's theorem, we have $m(\Phi^t(B)) = \int_B J\Phi^t(\Theta) dm(\Theta)$ for $B \in \sigma(\mathbb{T}^N)$ where $J\Phi^t(\Theta)$ is the Jacobian of $\Phi^t$ at $\Theta$. Furthermore, we have
\begin{align*}
\frac{d}{dt}m(\Phi^t(B)) 
&= \int_{\Phi^t(B)} (\nabla_\Theta \cdot \Phi^t) d\Theta\\
&=  -\int_{\Phi^t(B)} \sum_{j\neq i} A_{ji}\cos(\theta_j - \theta_i) ~d\Theta
\geq - J A_M N \cdot m(\Phi^t(B)),\quad t>0.
\end{align*}
Here, $A_M = \max_{i,j} |A_{ij}|$. Using Gr\"onwall's inequality, we have
\begin{align}\label{A-2}
m(\Phi^{dt}(B)) \geq e^{-J A_M N dt}m(\Phi^0(B)) = e^{-J A_M N dt}m(B).
\end{align}
We combine Eq. \eqref{A-1} and Eq. \eqref{A-2} to have
\begin{align*}
m\big( (\Phi^{dt})^{-1}(B)) e^{-J A_M N dt} \leq m\left( ((\Phi^{dt})^{-1} \circ \Phi^{dt} ) (B) \right) = m(B),\quad B\in\sigma(\mathbb{T}^N).
\end{align*}
We apply Lemma \ref{T1.1} with $C = e^{ J A_M N dt} $.
\end{proof}
On the other hand, one can generalize this theorem.
\begin{lemma}\label{L2.1}
Suppose that the state space $\mathcal{M}$ is compact, and the vector field $F$ and its gradient $\nabla F$ corresponding to the system in Eq. \eqref{SM-A-1} are sufficiently smooth and bounded, respectively. Then the corresponding Koopman operator $\mathcal{K}^{dt}$ with a number $dt>0$ is well-defined over ${\rm L}^2(\mathcal{M})$.
\end{lemma}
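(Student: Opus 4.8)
The plan is to verify the hypothesis of Lemma~\ref{T1.1} for $p=2$ and conclude, following verbatim the structure of the Kuramoto proof above but with the explicit divergence $-\sum_{j\neq i}A_{ji}\cos(\theta_j-\theta_i)$ replaced by a uniform bound extracted from boundedness of $\nabla F$. First I would record the preliminaries that make the argument go through: compactness of $\mathcal{M}$ makes $m$ finite, hence $\sigma$-finite, so Lemma~\ref{T1.1} is applicable; and since $F$ is smooth on the compact set $\mathcal{M}$ (with $F$ tangent to $\partial\mathcal{M}$ when $\mathcal{M}$ has a boundary, so that $\mathcal{M}$ is flow-invariant), the flow is complete and each time-$s$ map $\Phi^s:\mathcal{M}\to\mathcal{M}$ is a diffeomorphism with $(\Phi^{dt})^{-1}\circ\Phi^{dt}=\mathrm{id}_{\mathcal{M}}$; in particular $\Phi^{dt}$ is non-singular.

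Next I would rerun the Liouville--Gr\"onwall estimate. As in the proof of Liouville's theorem, $m(\Phi^t(B))=\int_B J\Phi^t(\Theta)\,dm(\Theta)$ for Borel $B\subset\mathcal{M}$, with $J\Phi^t>0$ (it equals $1$ at $t=0$, is continuous in $t$, and never vanishes for a diffeomorphism). Differentiating and using that the instantaneous rate of volume change along the flow is the divergence,
\begin{align*}
\frac{d}{dt}m(\Phi^t(B))=\int_{\Phi^t(B)}\big(\nabla_\Theta\!\cdot\! F\big)\,dm\;\geq\;-K\,m(\Phi^t(B)),\qquad t>0,
\end{align*}
where $K\coloneqq\sup_{\mathcal{M}}|\nabla\!\cdot\! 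F|<\infty$, finiteness being exactly where boundedness of $\nabla F$ enters (indeed $|\nabla\!\cdot\! F|=|\mathrm{tr}\,\nabla F|\leq N\sup_{\mathcal{M}}\|\nabla F\|$). Gr\"onwall's inequality then gives, just as in Eq.~\eqref{A-2},
\begin{align*}
m(\Phi^{dt}(B))\;\geq\;e^{-K\,dt}\,m(\Phi^0(B))\;=\;e^{-K\,dt}\,m(B).
\end{align*}

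Finally I would specialize to $B=(\Phi^{dt})^{-1}(S)$ for an arbitrary $S\in\sigma(\mathcal{M})$ and use $\Phi^{dt}\circ(\Phi^{dt})^{-1}=\mathrm{id}$ to obtain $e^{-K\,dt}\,m\big((\Phi^{dt})^{-1}(S)\big)\leq m(S)$, i.e.\ the hypothesis of Lemma~\ref{T1.1} with $C=e^{K\,dt}$; that lemma then yields $\mathcal{K}^{dt}g=g\circ\Phi^{dt}\in{\rm L}^2(\mathcal{M})$ for every $g\in{\rm L}^2(\mathcal{M})$. I expect the only genuinely delicate point to be the first step: one must know that $\Phi^{dt}$ is an honest diffeomorphism of $\mathcal{M}$, which on a compact boundaryless manifold is automatic from completeness of smooth vector fields, but in general needs the (natural) assumption that the flow does not leave $\mathcal{M}$; once that is in hand, the divergence bound and the change-of-variables step are routine --- literally the computation of the previous lemma with $K$ replacing the Kuramoto-specific constant $J A_M N$.
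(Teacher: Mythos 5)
Your proposal is correct and follows essentially the same route the paper intends: Lemma~\ref{L2.1} is stated as the direct generalization of the preceding Kuramoto lemma, whose proof (Liouville formula, divergence bound, Gr\"onwall, then Lemma~\ref{T1.1} with $C=e^{K\,dt}$) you reproduce with the Kuramoto constant $JA_MN$ replaced by $\sup_{\mathcal{M}}|\nabla\!\cdot\!F|$. Your added remarks on $\sigma$-finiteness from compactness and on flow-invariance of $\mathcal{M}$ ensuring $\Phi^{dt}$ is a diffeomorphism are sensible clarifications of hypotheses the paper leaves implicit.
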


If the system preserves the measure (a measure-preserving system is such that $m(S)=m(T^{-1}(S))$ for all $S\in\mathcal{A}$), one might have the one-shot convergence algorithm to have the spectrum of the Koopman operator \cite{colbrook24-limit}. However, this is not the case for the Kuramoto system, as apparent by the irreversible decay to incoherence. On the other hand, if the Koopman operator is an isometry or unitary, it can be proven that the spectrum lies in the unit circle. We introduce an equivalent condition for a Koopman operator being an isometry below. Our numeric observations suggest that the Koopman operator of the Kuramoto system is not an isometry, and the spectrum is instead constrained to the unit disk (i.e., it can contain both oscillatory and decaying modes).
\begin{lemma}
Let $y = y(t;0,y_0)$ be the solution to Eq. \eqref{SM-A-1} passing through $y_0$ at $t=0$. Suppose that the transformation $\Phi^{dt}:\mathcal{M}\to \Phi^{dt}(\mathcal{M})$ induced by the flow is diffeomorphism and the Koopman operator $\mathcal{K}^{dt}:{\rm L}^p(\mathcal{M})\to {\rm L}^p(\mathcal{M})$ is well-defined. Then, the Koopman operator $\mathcal{K}^{dt}$ is an isometry if and only if
\begin{align}\label{A-3}
 \int_{0}^{dt} \nabla_y \cdot F( y(s;0,x)) ds =0,\quad \mbox{a.e.}~x\in\mathcal{M}.
\end{align}
\end{lemma}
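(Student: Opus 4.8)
The plan is to reduce the isometry property to a pointwise condition on the Jacobian determinant of $\Phi^{dt}$ by a change of variables, and then to identify that determinant via Liouville's formula. Throughout write $\phi\coloneqq\Phi^{dt}$; by hypothesis $\phi$ is a $C^1$ diffeomorphism onto $\phi(\mathcal M)\subset\mathcal M$ (it is $C^1$ because $F$ is sufficiently smooth), and I treat a general ${\rm L}^p$, the case $p=2$ being the one of interest. \textbf{Step 1 (change of variables).} For any $g\in{\rm L}^p(\mathcal M)$, substituting $u=\phi(x)$ gives
\[
\|\mathcal{K}^{dt}g\|_p^p=\int_{\mathcal M}|g(\phi(x))|^p\,dm(x)=\int_{\phi(\mathcal M)}\frac{|g(u)|^p}{|J\phi(\phi^{-1}(u))|}\,dm(u),
\]
where $J\phi$ is the Jacobian determinant, which is continuous and nowhere zero on the compact set $\mathcal M$.

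\textbf{Step 2 (reduction to $|J\phi|\equiv 1$).} I claim $\mathcal{K}^{dt}$ is an isometry iff $|J\phi(x)|=1$ for every $x\in\mathcal M$. For the ``only if'' direction, subtract $\|g\|_p^p=\int_{\mathcal M}|g|^p\,dm$ from the identity of Step 1; as $g$ ranges over ${\rm L}^p(\mathcal M)$ the function $|g|^p$ ranges over all nonnegative ${\rm L}^1(\mathcal M)$ functions (take $g=h^{1/p}$), so testing against the indicators of $\{w>0\}$ and $\{w<0\}$ for the bounded weight $w(u)\coloneqq \mathbf 1_{\phi(\mathcal M)}(u)/|J\phi(\phi^{-1}(u))|-1$ forces $w=0$ a.e.; this makes $\mathcal M\setminus\phi(\mathcal M)$ a null set and $|J\phi|=1$ a.e.\ on $\phi(\mathcal M)$, hence (pulling back along the diffeomorphism $\phi^{-1}$, which maps null sets to null sets) $|J\phi|=1$ a.e.\ on $\mathcal M$, and then everywhere by continuity of $J\phi$. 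For the ``if'' direction, $|J\phi|\equiv1$ gives $m(\phi(\mathcal M))=\int_{\mathcal M}|J\phi|\,dm=m(\mathcal M)$, so the complement is null and the weight $w$ vanishes identically, whence $\|\mathcal{K}^{dt}g\|_p=\|g\|_p$.

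\textbf{Step 3 (Liouville's formula and conclusion).} It remains to express $J\phi(x)$. Let $M(t)\coloneqq D_x\Phi^t(x)$; differentiating the flow equation in the initial datum yields the first variational equation $\dot M(t)=\nabla_y F(\Phi^t(x))M(t)$, $M(0)=I$. Jacobi's formula then gives $\tfrac{d}{dt}\det M(t)=\mbox{tr}\left(\nabla_y F(\Phi^t(x))\right)\det M(t)$, so that, integrating,
\[
J\Phi^{dt}(x)=\det M(dt)=\exp\left(\int_0^{dt}\mbox{tr}\left(\nabla_y F(y(s,0,x))\right)ds\right)>0,
\]
using $y(s,0,x)=\Phi^s(x)$. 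Therefore $|J\Phi^{dt}(x)|=1$ if and only if $\int_0^{dt}\mbox{tr}(\nabla_y F(y(s,0,x)))\,ds=0$, which combined with Step 2 gives the stated equivalence.

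\textbf{Main obstacle.} The only genuinely delicate point is Step 2: one must argue carefully that requiring $\|\mathcal{K}^{dt}g\|_p=\|g\|_p$ for \emph{every} $g$ pins the Jacobian weight to $1$ almost everywhere — and simultaneously forces $\phi(\mathcal M)$ to have full measure, so that no ${\rm L}^p$ mass can be supported on $\mathcal M\setminus\phi(\mathcal M)$ — and then to upgrade this a.e.\ statement to the pointwise ``for all $x\in\mathcal M$'' of the lemma via continuity of $x\mapsto J\Phi^{dt}(x)$. Steps 1 and 3 are routine: the change of variables is immediate once the diffeomorphism hypothesis is in place, and the determinant identity is the classical Liouville/Jacobi formula.
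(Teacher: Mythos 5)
Your proof is correct and takes essentially the same route as the paper's: a change of variables reduces the isometry property to the Jacobian determinant of $\Phi^{dt}$ being identically one, and the variational equation together with Liouville's/Jacobi's formula identifies that determinant with $\exp\left(\int_0^{dt}\mbox{tr}\left(\nabla_y F(y(s,0,x))\right)ds\right)$. Your Step 2 merely spells out the ``only if'' direction (the role of $\Phi^{dt}(\mathcal{M})$ having full measure and the a.e.-to-pointwise upgrade via continuity) more carefully than the paper, which simply asserts that the arbitrariness of $g$ forces the exponential weight to equal $1$.
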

\begin{proof}
The classical theory of the ordinary differential equation tells us that the derivative of the solution with respect to the initial data is given by the matrix, called the fundamental matrix, 
\[
\frac{\partial y}{\partial y_0}(t ; 0,y_0) = \Psi(t; 0,y_0) \eqqcolon \Psi(t)
\]
where $\Psi(t)$ is the resolvent of the variational equation
\begin{align*}
\dot\Psi(t) = \nabla_y F(y(t;0,y_0))\cdot \Psi(t),\quad \Psi(0) = I.
\end{align*}
On the other hand, Liouville's formula is
\begin{align*}
|\det \Psi(t;0,y_0) | = |\det \Psi(0) |\exp\left( \int_{0}^t \mbox{tr} \left( \nabla_y F( y(s;0,y_0)) \right) ds \right)>0.
\end{align*}
We note that $\frac{\partial \Phi^{dt}}{\partial x}(x) = \Psi(dt;0,x)$. For an observable $g\in {\rm L}^p(\mathcal{M})$, 
\begin{align}\label{A-4}
\begin{aligned}
\| \mathcal{K}^{dt} g\|_{{\rm L}^p}^p &= \int |g \circ \Phi^{dt}|^p dx = \int |g|^p \left|\det \frac{\partial \Phi^{-dt}}{\partial x}\right|^p dx = \int |g|^p \left|\det \frac{\partial\Phi^{dt}}{\partial x}\right|^{-p} dx\\
&= \int |g|^p \exp\left(-p \int_{0}^{dt} \mbox{tr} \left( \nabla_y F( y(s;0,x)) \right) ds \right) dx =  \int |g|^p \exp\left(-p \int_{0}^{dt} \nabla_y \cdot F( y(s;0,x)) ds \right) dx,
\end{aligned}
\end{align}
where we used the change of variable in the second equality. We note that Eq. \eqref{A-3} implies that $\|\mathcal{K}^{dt} g\|_{{\rm L}^p} = \| g\|_{{\rm L}^p}$.

If the Koopman operator $\mathcal{K}^{dt}$ is an isometry, then the integrand $\exp(\cdots)$ in Eq. \eqref{A-4} should be $1$ since we picked an arbitrary $g\in {\rm L}^p$.
\end{proof}
\noindent {\bf Example.} The Koopman operator corresponding to a Hamiltonian system is an isometry.

The following more general theorem, constraining the spectrum to the unit disk, is also easy to prove.
\begin{theorem}[{\cite[Thereom 5.1]{Bruce19} }] Define $\mathbb{T}^N_0 \coloneqq \{x\in\mathbb{T}^N:((\Phi^{dt})^{(k)}(x))_{k=0}^\infty)\mbox{ is bounded.}\}$. Let $\lambda$ be an eigenvalue of $\mathcal{K}^{dt}$ with corresponding eigenfunction $g$, and assume that $|\lambda|>1$ and $g$ is bounded on every bounded subset of $\mathbb{T}^N$. Then for all $x\in\mathbb{T}^N_0$, $g(x)=0$.
\end{theorem}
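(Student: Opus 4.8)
The plan is to iterate the eigenfunction relation along forward orbits and then play the growth of $\lambda^k$ against the boundedness of $g$. First I would rewrite the eigenvalue equation $\mathcal{K}^{dt}g=\lambda g$ in its composition form $g\circ F=\lambda g$, where — following the notation of the quoted theorem — $F$ denotes the discrete-time map $\Phi^{dt}$ (equivalently the transformation $T$ of Eq.~\eqref{A-10}). Since the hypothesis ``$g$ bounded on every bounded subset'' presumes $g$ is a genuine everywhere-defined function, I would iterate this identity by repeated composition with $F$ to obtain
$g\bigl(F^{(k)}(x)\bigr)=\lambda^{k}g(x)$ for every $x\in\mathcal{D}$ and every $k\geq 0$.

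Next I would fix $x\in\mathcal{D}_{0}$. By definition of $\mathcal{D}_{0}$, the forward orbit $O(x):=\{F^{(k)}(x):k\geq 0\}$ is bounded, hence contained in a single bounded set $B\subseteq\mathcal{D}$ (and $x\in B$, since $F^{(0)}(x)=x$). By hypothesis $g$ is bounded on $B$, say $|g(y)|\leq C_{B}$ for all $y\in B$. Combining this with the iterated relation gives $|\lambda|^{k}\,|g(x)|=\bigl|g(F^{(k)}(x))\bigr|\leq C_{B}$ for all $k\geq 0$. Since $|\lambda|>1$, we have $|\lambda|^{k}\to\infty$, so letting $k\to\infty$ forces $g(x)=0$. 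As $x\in\mathcal{D}_{0}$ was arbitrary, the claim follows. (In particular, for the Kuramoto system $\mathcal{D}=\mathbb{T}^{N}$ is compact, so $\mathcal{D}_{0}=\mathbb{T}^{N}$ and every continuous eigenfunction with $|\lambda|>1$ must vanish identically — this is the sense in which the spectrum is confined to the unit disk.)

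The one delicate point I expect is the passage from the ${\rm L}^{2}$ eigenvalue equation to the \emph{pointwise} iterated identity $g(F^{(k)}(x))=\lambda^{k}g(x)$ at the particular point $x\in\mathcal{D}_{0}$. Starting only from $g\circ F=\lambda g$ holding $m$-a.e., the identity $g\circ F^{(k)}=\lambda^{k}g$ still propagates $m$-a.e., because non-singularity of $F$ (Lemma~\ref{T1.1}) ensures preimages of null sets are null, so each composition enlarges the exceptional set only by a null set; continuity of the eigenfunction — as is standard in the setting of \cite{Bruce19} — then upgrades this to the genuine pointwise statement needed at $x$. Everything else is the short two-line estimate above, so the proof is essentially immediate once this representative issue is settled.
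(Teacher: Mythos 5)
Your argument is correct and is essentially the standard proof that the paper itself defers to the cited reference for: interpret the eigenvalue equation pointwise as $g\circ F=\lambda g$, iterate to get $g\bigl(F^{(k)}(x)\bigr)=\lambda^{k}g(x)$, bound $|g|$ on the bounded forward orbit of $x\in\mathcal{D}_0$, and let $|\lambda|^{k}\to\infty$ to force $g(x)=0$. Your closing observation that on the compact torus $\mathcal{D}_0=\mathbb{T}^N$, so every such eigenfunction vanishes identically, is exactly the use the paper makes of this theorem.
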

In our DMD simulations of the Kuramoto system, we consider the phase variables on tori and use dictionary functions that are continuous and bounded on the tori.
Since all trajectories on the tori are bounded, Theorem 5.1 applies. If a Koopman eigenfunction is represented as a (finite) linear combination of the (bounded) dictionary elements in our DMD setting, then it is necessarily bounded on the torus, and hence its associated eigenvalue satisfies $|\lambda|\leq1$. 
Although the eigenmodes obtained from extended DMD do not necessarily correspond to exact Koopman eigenfunctions \cite{Korda18}, the positive real parts observed in the logarithms of the (discrete-time) eigenvalues may reflect DMD spectral noise due to finite time of the trajectories and the non-invariance of the extended DMD library.

\subsubsection{Note on the continuous and point spectra of the Koopman operator}
We reconstruct the solution trajectory using eigenvalues of the approximated Koopman operator. However, an infinite-dimensional linear operator may have a nonempty point spectrum and a nonempty continuous spectrum simultaneously, and these affect the solution construction \cite{colbrook24-res}. Since we only have finite eigenvalues in each case, we display a graphical comparison between our reconstruction and the true trajectory to justify our reconstruction. If there is no continuous spectrum, then one might justify the solution reconstruction as a limit of a sequence. However, this does not appear to be the case for our Kuramoto systems.

The Koopman operators corresponding to each system we study have an infinite number of eigenvalues. If a linear operator is compact, then the cardinality of its eigenvalues is countable. However, it is well known that if the state space $(\mathcal{M},\mathcal{A},\mu)$ is a non-atomic measure space, then no composition operator on ${\rm L}^2(\mathcal{M})$ is compact (\cite{Singh93} Corollary 2.3.4). Thus, the Koopman operator corresponding to the Kuramoto system over ${\rm L}^2(\mathbb{T}^N)$ is not compact. 

\section{Details in the minimal example}
\subsection{Koopman analysis}
We first briefly describe analytic results concerning the Koopman spectrum for Eq.~1 of the main text. We note that general solutions to such one-dimensional dynamics can be easily found, which can be used to construct Koopman eigenfunctions \cite{Morrison24}. The specific functional forms in the case of Eq.~1 are not especially enlightening, so we instead emphasize the asymptotic behavior as the system approaches the $\theta=0$ attractor in the large time limit. In this limit, the equations can be approximated as 
\begin{equation*}
\dot{\theta}=-\theta^\zeta,
\end{equation*}
and the general solutions are $\theta(t)=\theta_0 e^{-t}$ for $\zeta=1$ and $\theta(t)=1/\sqrt{2t+{\theta_0}^{-2}}$ for $\zeta=3$, where $\theta(0)=\theta_0$. Then it follows that the measurement functions $g^{(1)}_\mu(x)\equiv x^{-\mu}$ and $g^{(3)}_\mu(x) \equiv e^{\mu/2x^2}$ will asymptotically evolve exponentially in the respective cases $\zeta=1$ and $\zeta=3$, and are thus candidates for Koopman eigenfunctions near $x=0$. However, for some values $\mu$, these functions are not sufficiently regular at $x=0$. In both cases, $\mathrm{Re}(\mu)<0$ is required to avoid blow up at $x=0$. In the exponentially decaying case, while $g^{(1)}_\mu(x)=x^{-\mu}$ does not blow up at $x=0$ if $\mathrm{Re}(\mu)<0$, it is not continuously differentiable at $x=0$ unless $\mathrm{Re}(\mu)<-1$. Thus, the regularity that one imposes on the eigenfunctions determines the size of the gap in the spectrum. In practice with DMD, the choice of dictionary functions and the numerical implementation will restrict which measurements can be well approximated, and thus impose such regularity requirements on the DMD spectrum.  

More formally, in Eq.~1, we can establish bounds 
($\mathcal{M}=(0,\pi/2)$)
\begin{align*}
&\zeta = 1;\quad -\theta(t) \leq \dot\theta(t) \leq -\frac{2}{\pi}\theta(t) \quad \Rightarrow\quad e^{-t} \theta(0) \leq \theta(t) \leq e^{-\frac{2}{\pi} t}\theta(0),\quad t\geq0,\\
&\zeta =3;\quad -\theta^\zeta(t) \leq \dot\theta(t) \leq -(\frac{2}{\pi})^\zeta \theta^\zeta(t) \quad \Rightarrow\quad \frac{(\zeta-1)^{-1}}{t + \frac{1}{(\zeta-1)\theta^{\zeta-1}(0)}}\leq \theta^{\zeta -1}(t)\leq \frac{(\zeta-1)^{-1}}{ (\frac{2}{\pi})^\zeta  t + \frac{1}{(\zeta-1)\theta^{\zeta-1}(0)}},\quad t\geq0.
\end{align*}
We study continuously differentiable eigenfunctions $\varphi=\varphi_r + \mathrm{i}\varphi_i$ of the Koopman operator satisfying that $|\varphi|$ is not constant, say $\mathcal{K}^t\varphi = e^{\mu t}\varphi$ for $\mu = \mu_r + \mathrm{i}\mu_i$.
For a positive constant $\ell$, we observe
\begin{align*}
\zeta =1;\quad 
\frac{d}{dt}\left( \frac{\varphi_r^2(\theta(t)) + \varphi_i^2(\theta(t))}{\theta(t)^{-\ell\mu_r}} \right) 
= 2\mu_r |\varphi(\theta(t))|^2\theta(t)^{\ell\mu_r}\left(1 + \frac{\ell}{2}\theta(t)^{-1}\dot{\theta}(t)\right),\quad t\geq0.
\end{align*}
Without loss of generality, we assume $\mu_r \leq 0$. When $\ell = \pi$ or $\ell = 1$, we have
\begin{align}\label{A-6}
\begin{aligned}
    &\zeta=1;\quad \frac{d}{dt}\left( \frac{\varphi_r(\theta(t))^2 + \varphi_i(\theta(t))^2}{\theta(t)^{-\pi\mu_r}} \right)\geq 0,\quad \frac{d}{dt}\left( \frac{\varphi_r(\theta(t))^2 + \varphi_i(\theta(t))^2}{\theta(t)^{-\mu_r}} \right)\leq 0\\ 
    &\Rightarrow\quad \frac{\varphi_r^2(\theta(0)) + \varphi_i^2(\theta(0))}{\theta(0)^{-\mu_r}} \theta(t)^{-\mu_r} \geq \varphi_r(\theta(t))^2 + \varphi_i(\theta(t))^2 \geq \frac{\varphi_r^2(\theta(0)) + \varphi_i^2(\theta(0))}{\theta(0)^{-\pi\mu_r}} \theta(t)^{-\pi\mu_r},\quad t\geq0.
\end{aligned}
\end{align}
Our claim is that $\varphi_r(\theta)^2 + \varphi_i(\theta)^2 \lesssim \theta^{-\mu_r}$ for $\theta\in\mathcal{M}$. On the other hand, the range of the solution trajectory $\theta(t;0,\pi/2)$ covers the interval $(0,\pi/2)$. For any $\theta\in(0,\pi/2)$, we take $t_1$ such that $\theta(t_1;0,\pi/2)=\theta$. We evaluate $t=t_1$ to Eq. \eqref{A-6} to have $\varphi_r^2(\theta) + \varphi_i^2(\theta) =\varphi_r^2(\theta(t_1)) + \varphi_i^2(\theta(t_1))\leq \frac{\varphi_r^2(\pi/2) + \varphi_i^2(\pi/2)}{(\pi/2)^{-\mu_r\pi/2}} \theta^{-\mu_r}$.
This holds for any $\theta\in\mathcal{M}$. In a similar way, we have $\theta^{-\pi\mu_r}\lesssim \varphi_r(\theta)^2 + \varphi_i(\theta)^2$. Thus, we have $\theta^{-\pi\mu_r}\lesssim |\varphi(\theta)|^2\lesssim \theta^{-\mu_r}$. Similarly, we have $e^{\mu_r/\theta^2}\lesssim |\varphi(\theta(t))|^2\lesssim e^{(\frac{2}{\pi})^3\mu_r/\theta^2} $ for $\zeta=3$.

\subsection{DMD convergence and reconstructions}
The top row in Fig.~\ref{figs1} shows the evolution of the DMD mode amplitudes for the five initial conditions presented in the main text. In both the $\zeta=1$ and $\zeta=3$ cases, the mode amplitudes decay exponentially for each trajectory, as expected. We use the Fourier series coefficients of the sawtooth function to reconstruct the trajectory from the reconstructed library terms as
\begin{equation}
\hat{x}^j(t)=\sum_k \frac{(-1)^k}{2k} \mathrm{Im}\left(\sum_i \tilde{\Phi}_i^k b_i^j(0) e^{\mu_i t}\right)
\end{equation}
The bottom row in Fig.~\ref{figs1} shows the evolution of the trajectories and the DMD reconstructions. We see that, aside from brief initial errors for some cases, the reconstructions very closely reproduce the trajectories.
\begin{figure}[hbt]
\includegraphics[width=0.75\columnwidth]{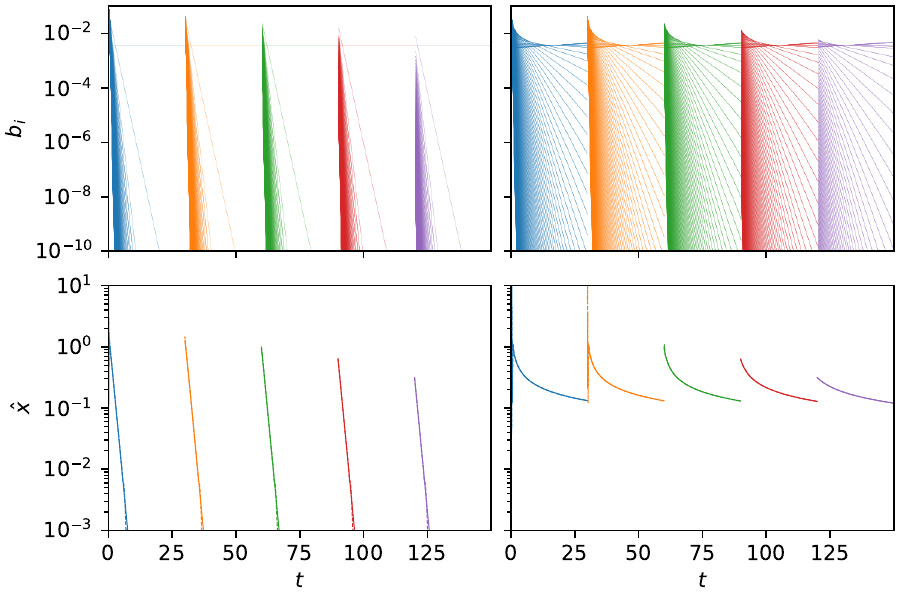}
\caption{DMD mode amplitude $b_i$  (top row) and  trajectory reconstruction $\hat{x}$ (bottom row) vs time for Eq.~1 of the main text with $\zeta=1$ (left column) and $\zeta=3$ (right column). The colors indicate different trajectories. In the bottom row, the reconstructions are marked by dashed lines and the actual trajectories are marked by continuous lines, but the difference between the two is almost indistinguishable except at the start of the first three trajectories.  \label{figs1}}
\end{figure}

Figure \ref{figs2} shows the distribution of the logarithm of the real part of the eigenvalues (left) and the amplitudes (right) for increasing dictionary size. If the scaling followed Ogielski and Stein's example in the main text, we would expect the data to lie on straight lines. While the scaling here differs somewhat, it is clear from the reconstructed trajectories that it still results in algebraic decay.
\begin{figure}[hbt]
\includegraphics[width=0.75\columnwidth]{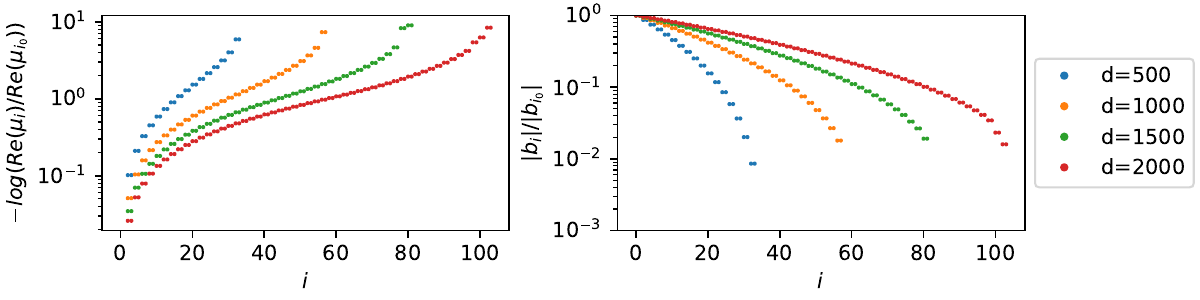}
\caption{(a) Logarithmic scaled eigenvalue vs mode index on a log-linear plot and (b) scaled mode amplitude vs mode index for the DMD spectrum in the $\zeta=3$ case of Eq.~1 of the main text.  \label{figs2}}
\end{figure}

\section{Details in the oscillator glass}
\subsection{Numerical integration}
Ottino and Strogatz \cite{strogatz} used a fixed timestep $dt=0.01$ with a 4th-order Runge-Kutta method on a CPU.  We implement instead an adaptive time-stepping Dormand-Prince with a 5/4 embedding Runge-Kutta method on the GPU (the same method used by the ode45 in Matlab and the Python scipy method solve\_ivp).  Our efficient simulations are capable of exploring systems as large as $N=10^6$,  which is three orders of magnitude larger than those previously considered. The trajectory is sampled at the $dt=0.01$ timesteps using a free fourth-order polynomial interpolation. We reproduce both the low $K=2$ clustered states and the apparent high-$K$ algebraic decay to incoherence for $N=5000$ oscillators in Fig.~\ref{fig1}(a).

\begin{figure}[hbt]
\includegraphics[width=0.7\columnwidth]{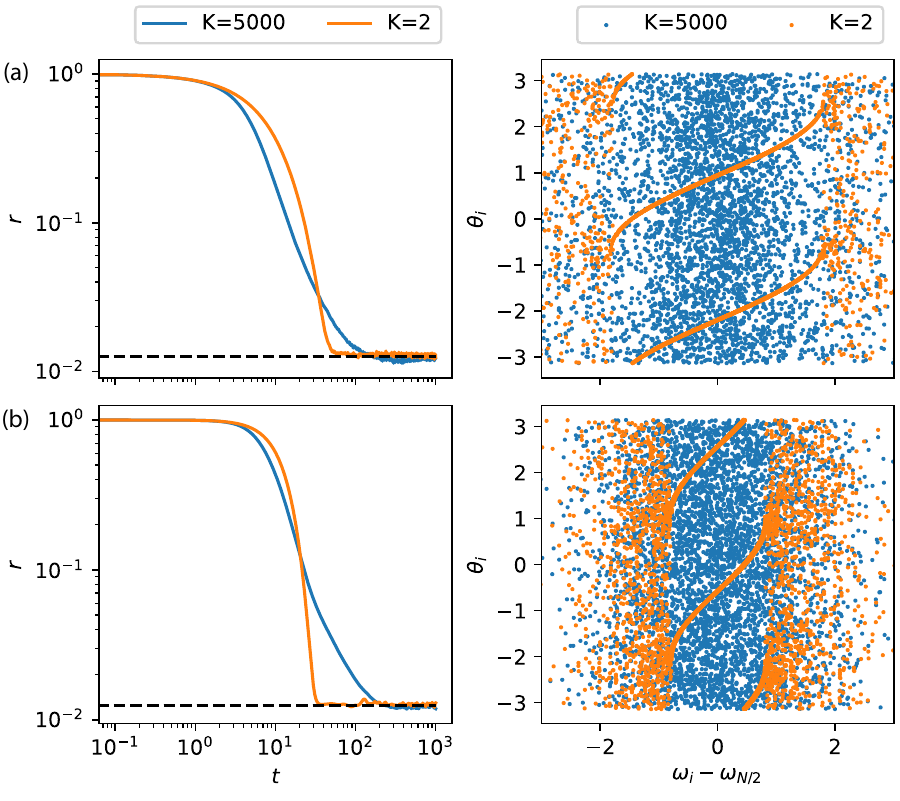}
\caption{Volcano clusters for small $K$ and slow decay to incoherence for large $K$ for Lorenztian distributed frequencies with $J=3$ (a) and normal distributed frequencies with $J=1.75$ (b), with $N=5000$.  The noise floor from Eq.~\eqref{incoherent} is shown by the dashed black line.  Order parameters are averaged over $750$ simulations. \label{figs3}}
\end{figure}

Our first observation is that the fixed time step used by Ottino and Strogatz is not sufficiently small to guarantee an accurate solution for the problem with Lorentzian distributed frequencies, even for the relatively small systems with $N=5000$ presented.  To control the error well, the timestep should be smaller than $O(T_{\mathrm{min}})$ where $T_{\mathrm{min}}=2\pi/\omega_{\mathrm{max}}$ is the natural period of the fastest oscillator since otherwise, the contributions to Eq.~2 will oscillate many times between time steps.  If the frequencies are distributed with a probability density $f(\omega)$, it can be shown that the distribution for the maximum frequency of $N$ samples is $Nf(\omega)F(\omega)^{N-1}$, where $F$ is the cumulative probability distribution, i.e. $F'(\omega)=f(\omega)$. So the expected maximum frequency is $\omega_{\mathrm{max}} = \int N\omega f(\omega)F(\omega)^{N-1} d\omega$. For the standard Lorenzian distribution, this gives $\omega_{\mathrm{max}} \sim 1.6475 N$ asymptotically for large $N$, leading to $T_{\mathrm{min}}=0.0076$ for $N=5000$.  Indeed, our adaptive time step uses $dt\approx0.0015$ for $K=N=5000$. This scaling poses a significant challenge to increasing $N$ with Lorentzian-distributed natural frequencies. On the other hand, as shown in Fig.~\ref{fig1}(b), we observe a similar volcano transition for normally distributed natural frequencies in the small $K$ case. We find that the critical coupling $J_c$ is smaller than the value of $2$ in the Lorentzian case, consistent with the recent theoretical prediction of $J_c=\sqrt{8/\pi}\approx1.60$ \cite{Pazo23}.   For the normal distribution, we numerically observe a much slower growth with $\omega_{\mathrm{max}}  < \log N$. Thus, we opt to use normally distributed natural frequencies throughout this work. In the normally distributed case, the adaptive time step does not become prohibitively small even for $N$ and $K$ up to $10^6$.

A second consideration for large $N$ and $K$ is the memory required to store the adjacency matrix. This requires at least $N\times K$ bits to store the $u_m(j)=\pm 1$ values, and, more simply, $N^2 \times 8$ bytes to store each floating point in the adjacency matrix.  Modern graphics cards typically have less than 20 GB of memory available, so this would constrain $N=K\lesssim 5\times 10^4$ if we store the entire adjacency matrix during the computation.  However, the numerical integration routine itself requires only $O(N)$ bytes of memory.  By careful use of the counter-based Philox64 random number generator, we can efficiently regenerate the numbers at an arbitrary position of a fixed random sequence without storing the entire sequence.  Thus, we can recalculate the adjacency elements in each CUDA kernel call only when required, and keep the memory requirements down to $O(N)$. Our limitation is then only the time required to integrate the system. In this way, it takes less than ten seconds to integrate $K=N=10^4$ oscillators and about two days to integrate $K=N=10^6$ oscillators with normally distributed natural frequencies. (Compare this with scipy integration on a CPU, which takes about two minutes for the $K=N=10^4$ case).

\subsubsection{Noise floor}
For large $N$ and $K$, there is no clustering and the oscillator phases are expected to decay towards an incoherent state which is uniformly random and independent, and the order parameter will decrease until finite size effects become relevant. We can analytically predict the finite-$N$ scaling, noting that the real and imaginary components of the complex order parameter will be distributed according to sums of $N$ independent random variables (sampled by taking the cosine and sine of a uniform distribution). For large $N$, by the central limit theorem, these sums will converge to two independent normal distributions, each with variance $1/2N$. The magnitude of the order parameter will then be distributed as a Rayleigh distribution with $\sigma=(2N)^{-1/2}$, and the mean magnitude of the order parameter is then 
\begin{equation}
r_{\mathrm{incoherent}}=\sqrt{\pi/4N}.  \label{incoherent}
\end{equation}
This noise floor limits the period over which exponential or algebraic decay can be observed. Larger systems thus enable longer observations of the decay towards incoherence.

\subsubsection{Systems with large $N$}
Next, we consider the transition between the exponential and algebraic decay as $K$ and $N$ increase. To top row of Fig.~\ref{figs4} shows ours results at fixed coupling constant $J$ for cases with $N\in(10^4, 2\times 10^4, 10^5,2\times10^5)$. We see that for intermediate $K/N$, there is a kink in the curves signaling a transition between exponential and algebraic decay around $t=O(1)$, with the time of onset of the transition increasing modestly with increasing $N$.  The rate of the algebraic decay also decreases with increasing $N$, because the difference between the $O(N^{-1/2})$ noise floor in Eq.~\eqref{incoherent} and the values of $r$ at the transition time becomes increasingly small. Thus, at fixed coupling constant, the purported algebraic decay appears to be confined to an increasingly short period as $N$ increases, while the rate of decay is also increasingly slow. 

The bottom row of Fig.~\ref{figs4} shows our results when we scale the couple constant according to $J\sim\sqrt{K}$. We observe that scaling the coupling constant in this way allows us to fix the time of kink signaling the transition between exponential and algebraic decay.  Thus, our findings and experiments suggest that to maintain longer periods of algebraic decay as $N$ increases in a thermodynamic limit, we should simultaneously increase the coupling constant as $J\sim \sqrt{K}$.  This observation may be important for future studies considering the thermodynamic limit of the system.
\begin{figure}[hbt]
\includegraphics[width=\columnwidth]{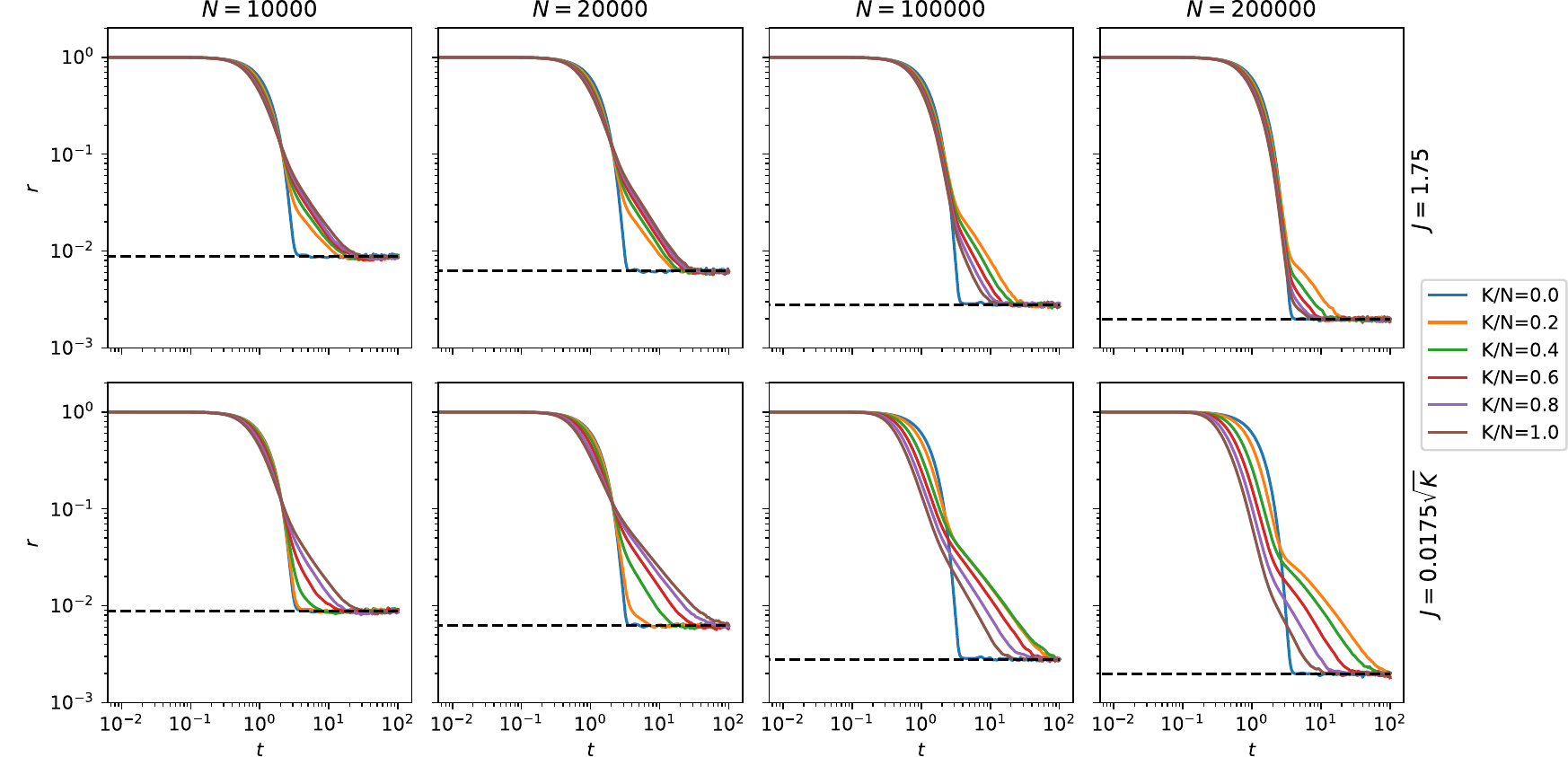}
\caption{Order parameter averaged over $500$ simulations vs time for various values of $N$, $K$, and $J$. \label{figs4}}
\end{figure} 

\subsection{Koopman analysis}
The flow induced by Eq.~2 (main text) on the torus ${\mathcal{T}}^N$ is generated by vector field 
\begin{equation}
\mathbf{v}=\sum_i \left(\omega_i + J\sum_j A_{ij} \sin\left(\theta_j-\theta_i)\right)\right)\mathbf{v}_i,
\end{equation}
where $\mathbf{v}_i=\frac{\partial}{\partial \theta_i}$ is the unit vector corresponding to the usual angle coordinate $\theta_i$ on ${\mathcal{T}}^N$.  The nonlinear dynamics can be equivalently described by the linear Liouville-von Neuman equation for a probability density $\rho : \mathcal{T}^N \to \mathbb{R}$,
\begin{equation}
\frac{\partial \rho}{\partial t} = {\cal L}^\dag_{\mathbf{v}}(\rho) \equiv -\sum_i \frac{\partial}{\partial \theta_i}\left[ \left(\omega_i + J\sum_j A_{ij} \sin\left(\theta_j-\theta_i)\right)\right)\rho\right], 
\end{equation}
or by the adjoint equation describing the infinitesimal flow of measurement quantities $g : \mathcal{T}^N \to \mathbb{R}$,
\begin{equation}
\frac{\partial g}{\partial t} = {\cal L}_{\mathbf{v}}(g) \equiv \sum_i \left(\omega_i + J\sum_j A_{ij} \sin\left(\theta_j-\theta_i)\right)\right)\frac{\partial g}{\partial \theta_i}.
\end{equation}
We can exponentiate these infinitesimal operators to find the Koopman operator ${\mathcal{K}}^\tau \equiv e^{\tau \mathcal{L}_{\mathbf{v}}}$ and the Perron-Frobenius operator $\mathcal{P}^\tau \equiv e^{\tau \mathcal{L}^{\dag}_{\mathbf{v}}},$ but the infinitesimal forms are simpler and convenient for our purpose here.  It is clear that any Koopman eigenfunction $\mathcal{K}^\tau[g]=\lambda g$ is also a Lie generator eigenfunction $\mathcal{L}_\mathbf{v}[g]=\mu g$ and vice versa, and the eigenvalues are related by $\lambda = e^{\mu \tau}$. 

We use the Fourier basis to express functions as infinite-dimensional vectors of Fourier coefficients. That is, we equate the ${\rm L}^2$ function $g(\theta)=\sum_{\bm{\mu}\in\mathbb{Z}^N} g_{\bm{\mu}} e^{ \imath \sum_i \mu_i\theta_i}$ with the $\ell^2$ coefficients $g_{\bm{\mu}}$, where $\bm{\mu}\in \mathbb{Z}^N$ spans over all lists of $N$ integers. Then, it is easy to show that the Lie generator $\mathcal{L}_\mathbf{v}$ acts on $g_{\bm{\mu}}$ as a matrix $\sum_{\bm{\nu}\in\mathbb{Z}^N} L_{\bm{\mu} \bm{\nu}}g_{\bm{\nu}}$ where 
\begin{equation}
L_{\bm{\mu} \bm{\nu}} = \imath \sum_i \omega_i \nu_i \delta_{\bm{\mu}}^{\bm{\nu}} + J \sum_{ij} A_{ij} \nu_i (\delta_{\bm{\mu}}^{\bm{\nu}+\bm{e}_j-\bm{e}_i}-\delta_{\bm{\mu}}^{\bm{\nu}-\bm{e}_j+\bm{e}_i})/2.
\end{equation}
Here, $\mathbf{e}_j$ is the vector with a unit entry in the $jth$ element and a zero in all others and $\delta_{\bm{\mu}}^{\bm{\nu}}=\prod_i \delta_{\mu_i}^{\nu_i}$ is the product of the Kronecker delta over all indices.    
Note that for $J=0$, the matrix $L_{\bm{\mu} \bm{\nu}}$ is diagonal, and indeed the Fourier basis elements $\hat{g}_{\bm{\mu}}(\theta) = e^{ \imath\sum_i \mu_i \theta_i}$ are Koopman eigenfunctions with (generator) eigenvalues $\mu =  \imath \sum_i \mu_i \omega_i$. 

We can alternatively consider temporal discretizations of the Koopman operator. 
For simplicity, we consider Euler's scheme: for a fixed $dt>0$, our discretized system is given by $\Theta_{n+1} = \Theta_n + h F(\Theta_n)$ where $F^i(\Theta_n) = \omega_n + J \sum_{j=1}^N J_{ji}\sin(\theta_n^j -\theta_n^i)$. The corresponding map $\Theta \mapsto \Theta + dtF(\Theta) \eqqcolon G(\Theta)$ is continuously differentiable and its derivative at $\Theta$ is given by
{\small
\begin{align*}
I_N + hJ
\begin{pmatrix}
-\sum_{\substack{ 1\leq j\leq N\\ j\neq1} } J_{j1}\cos(\theta^j - \theta^1) & J_{21}\cos(\theta^2-\theta^1) &\cdots & J_{N1} \cos(\theta_N -\theta_1) \\
J_{12}\cos(\theta^1-\theta^2) & -\sum_{\substack{ 1\leq j\leq N\\ j\neq2} } J_{j2}\cos(\theta^j - \theta^2)  &\cdots & J_{N2} \cos(\theta^N -\theta^2) \\
\vdots & \vdots &\ddots & \vdots \\
J_{1N}\cos(\theta^1-\theta^N) & J_{2N} \cos(\theta^N -\theta^2) &\cdots  &-\sum_{\substack{ 1\leq j\leq N\\ j\neq N} } J_{jN}\cos(\theta^j - \theta^N)
\end{pmatrix}.
\end{align*}}
We will see that $\inf_{\Theta\in\mathbb{T}^N} |\det({\rm D} G(\Theta))|>0$ for sufficiently small $dt>0$ such that the composition operator $\mathcal{K}_2^{dt}$ defined as $[\mathcal{K}_2^{dt}g](\Theta) = g( \Theta + dt F(\Theta))$ for $g\in {\rm L}^2(\mathbb{T}^N)$ is well-defined:
\begin{align*}
\int_{\mathbb{T}^N} |\mathcal{K}_2^{dt} g(\Theta)|^2 d\Theta= \int_{\mathbb{T}^N} |g( \Theta + dt F(\Theta) )|^2 d\Theta
= \int_{\mathbb{T}^N} |g(\tilde\Theta)|^2 |\det {\rm D}G(\tilde{\Theta})|^{-1} d\tilde\Theta < \infty.
\end{align*}
We denote the big matrix above by $H(\Theta)$ for each $\Theta$. Note that this matrix $H(\Theta)$ is uniformly componentwise bounded, $\sup_{\Theta\in\mathbb{T}^N}\max_{ij} |H_{ij}(\Theta)|<\infty$, so $\Theta\mapsto \det(I_N + hJH(\Theta))$ is well-defined. We observe that the assignment $\Theta\mapsto \det(I_N + dt JH(\Theta))=\det({\rm D}G(\Theta))$ is continuous over the compact set $\mathbb{T}^N$, so this assignment attains its minimum at $\Theta_0$: $\det(I_N + hJH(\Theta))\geq \det(I_N + dt JH(\Theta_0))$ for all $\Theta\in\mathbb{T}^N$. For sufficiently small $dt>0$, $\det(I_N + dt J(\Theta_0)) >c$ for some $c>0$ due to the continuity of the map $\det$ and the fact that $\det(I_N) = 1$.

Either using truncations in the Fourier basis or with the temporal discretization, we can consider calculating the Koopman spectrum analytically or numerically, but the high dimensionality poses significant challenges. Note that in the coupled case, the function $\hat{g}_{\bm{\mu}}$ remains a Koopman eigenfunction only for the case with constant multiindex $\mu_i = m$, since $m \sum_i \theta_i$ is a conserved quantity. Given our knowledge of the uncoupled eigenfunctions, a perturbative approach could be employed to investigate the other eigenfunctions for small coupling constants.  It would be interesting to compare these approximations with the data-driven DMD approximations in future studies.

\subsection{DMD convergence and reconstructions}
We first present in Fig.~\ref{figs5} the apparent convergence of the DMD spectrum as the dictionary size increases.
\begin{figure}[hbt]
\includegraphics[width=0.75\columnwidth]{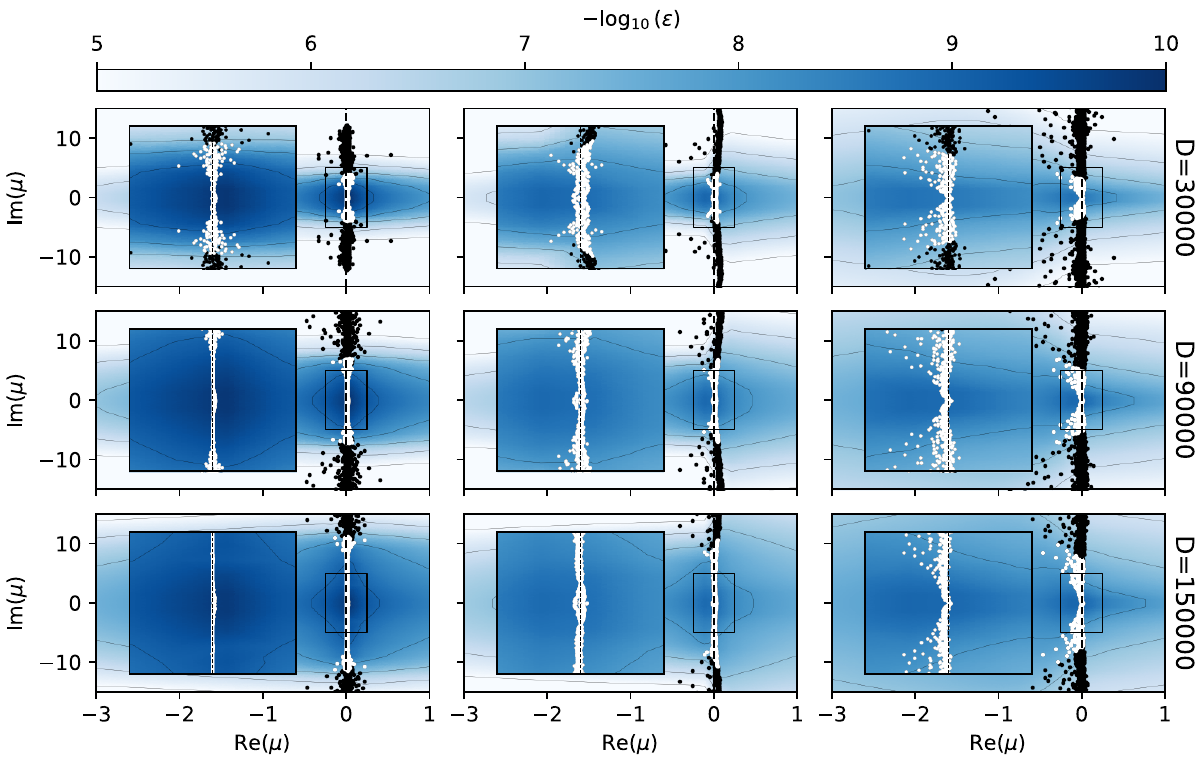}
\caption{DMD spectrum as in Fig.~4 of the main text for increasing dictionary size $D$. \label{figs5}}
\end{figure}
Next, we confirm that the DMD mode amplitudes evolve exponentially and that the reconstruction of the Kuramoto order parameter in Eq.~4 of the main text is very accurate, as shown in Fig.~\ref{figs6}.
\begin{figure}[hbt]
\includegraphics[width=0.75\columnwidth]{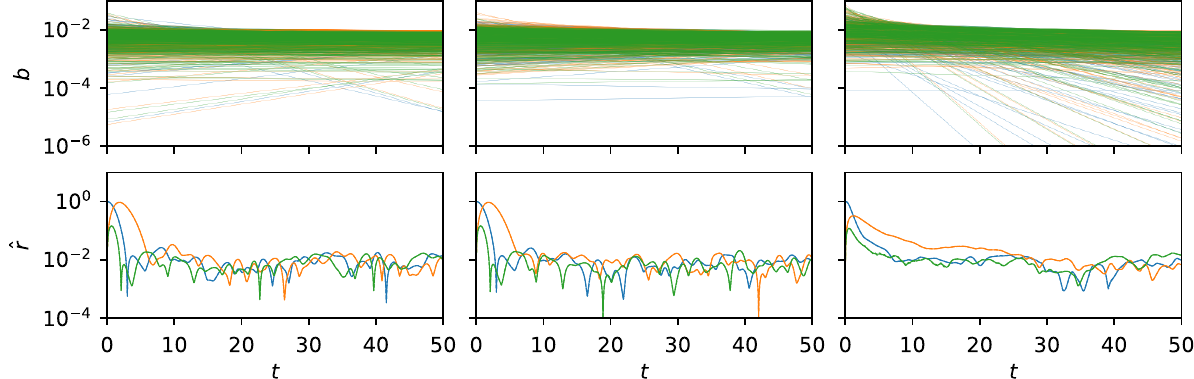}
\caption{DMD mode amplitudes $b_i$  as in Fig.~\ref{figs5} (top row) and trajectory (bottom row) actual order parameters (solid lines) and reconstructed order parameter (dashed lines) for the three parameter regimes (columns) and the three trajectories (line colors). The actual and reconstructed order parameters are nearly indistinguishable.\label{figs6}}
\end{figure}
Finally, in Fig.~\ref{figs7} we present the scaling between the magnitude of the real part of the eigenvalues and the reconstruction amplitudes for the DMD reconstructions of the glassy dynamics, which is qualitatively similar to the minimal example in Fig.~\ref{figs2}.
\begin{figure}[hbt]
\includegraphics[width=0.9\columnwidth]{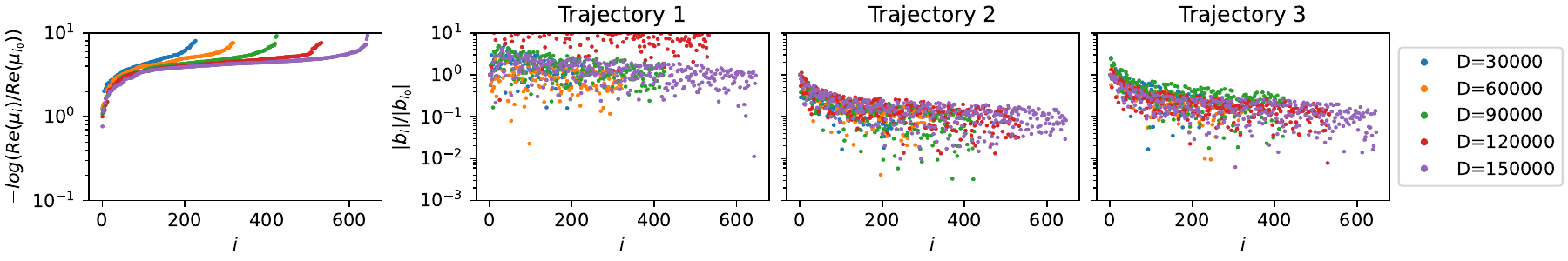}
\caption{Scaled DMD eigenvalue real part (left panel) and amplitudes for the three trajectories (right three panels) as in Fig.~\ref{figs2}. \label{figs7}}
\end{figure}

\subsection{Relation to the nonlinear Fokker-Planck equation}
Classical studies of the Kuramoto system often consider the number of oscillators with phase $\theta$ and natural frequency $\omega$ in a statistical ensemble via a density function $\hat{\rho}(\theta,\omega,t)$. In the large $N$ limit, the evolution of this density has been modeled with a nonlinear Fokker-Planck equation
\begin{equation}
\label{nlfp}
\frac{\partial}{\partial t}\hat{\rho}(\theta,\omega,t) = - \frac{\partial}{\partial \theta}\left[ \hat{\rho}(\theta,\omega,t) \left(\omega + K\int \hat{\rho}(\tilde{\theta},\tilde{\omega},t) \sin(\tilde{\theta} - \theta) ~d\tilde{\theta}d\tilde{\omega}\right)  \right].
\end{equation}
Here, we aim to demonstrate the relationship between Eq.~\eqref{nlfp} and our DMD analysis. To do so, we consider a statistical ensemble of $N$ oscillators with phases $\theta_i$ and natural frequencies $\omega_i$ with a density $\rho(\theta_i,\omega_i,t)$. It is easy to see that, just as in the case with fixed frequencies, the linear Liouville-von Neumann equation in Eq.~\eqref{lvme}, which we repeat here,
\begin{equation}
\frac{\partial}{\partial t}  \rho(\theta_i,\omega_i,t)= -\sum_i \frac{\partial}{\partial \theta_i}\left[ \left(\omega_i + J\sum_j A_{ij} \sin\left(\theta_j-\theta_i\right)\right)\rho(\theta_i,\omega_i,t)\right], \label{lvme}
\end{equation}
governs the evolution of this full density. (This follows because the natural frequencies in each microstate here do not vary in time. For example, we can simply extend the system state to include the natural frequencies as dynamical variables whose time derivatives vanish.) The full density $\rho(\theta_i,\omega_i,t)$ resides in a much higher dimension than the density $\hat{\rho}(\theta,\omega,t)$ in the Fokker-Planck equation and is capable of encoding complex correlations between the oscillator phases. A natural connection between the two follows from considering marginal distributions of the full density $\rho(\theta_i,\omega_i,t)$. 

We define the one-particle distribution function as
\begin{equation}
\rho^{(1)}(\theta, \omega, t)\equiv \sum_k \int \rho(\theta_i,\omega_i,t)\Big\rvert_{\substack{\theta_k= \theta \\ \omega_k=\omega}}~\overline{d\theta_kd\omega_k},
\end{equation}
where $\overline{d\theta_kd\omega_k}\equiv \prod_{i\neq k} d\theta_id\omega_i$ is the product of all the differentials except the $k$th ones. The time evolution of the one-particle distribution function can be directly evaluated from the definition and Eq.~\eqref{lvme},
\begin{align}
\frac{\partial}{\partial t}\rho^{(1)}(\theta, \omega, t) &= \sum_k \int \frac{\partial}{\partial t}\rho(\theta_i,\omega_i,t)\Bigg\rvert_{\substack{\theta_k= \theta \\ \omega_k=\omega}}~\overline{d\theta_kd\omega_k} =\dot{\rho}^{(1)}_{\omega}+\dot{\rho}^{(1)}_{\mathrm{interaction}} 
\end{align}
where
\begin{equation}
\dot{\rho}^{(1)}_{\omega} \equiv -\sum_k \int \sum_i \frac{\partial}{\partial \theta_i}\left[ \omega_i \rho(\theta_i,\omega_i,t)\right]\Big\rvert_{\substack{\theta_k= \theta \\ \omega_k=\omega}}~\overline{d\theta_kd\omega_k}, 
\end{equation}
and 
\begin{equation}
\dot{\rho}^{(1)}_{\mathrm{interaction}}  \equiv -\sum_k \int \sum_i \frac{\partial}{\partial \theta_i}\left[ \left(J\sum_j A_{ij} \sin\left(\theta_j-\theta_i)\right)\right)\rho(\theta_i,\omega_i,t)\right]\Big\rvert_{\substack{\theta_k= \theta \\ \omega_k=\omega}}~\overline{d\theta_kd\omega_k}. \label{rhoint}
\end{equation}

In both cases, terms with $i\neq k$ vanish by the divergence theorem, and we can consider only the $i=k$ terms in the sums. Thus, the natural frequency term can be simply expressed in terms of the one-particle distribution function
\begin{equation}
\dot{\rho}^{(1)}_{\omega} = \frac{\partial}{\partial \theta}\left(\omega\rho^{(1)}(\theta,\omega,t)\right).
\end{equation}

To evaluate the interaction term, we must introduce the two-particle distribution function
\begin{equation}
\rho^{(2)}(\theta, \omega, \tilde{\theta}, \tilde{\omega}, t)\equiv \sum_{k} \sum_{l\neq k} \int  \rho(\theta_i,\omega_i,t)\Bigg\rvert_{\substack{\theta_k= \theta \\ \omega_k=\omega \\\theta_l= \tilde{\theta} \\ \omega_l=\tilde{\omega}}}~\overline{d\theta_{kl}d\omega_{kl}}, \label{twopart}
\end{equation}
where $\overline{d\theta_{kl}d\omega_{kl}}\equiv \prod_{i\neq k,l} d\theta_id\omega_i$ now excludes both the $k$th and $l$th differentials.  
Furthermore, assuming that the oscillators are identically coupled and have identically distributed natural frequencies, we can make the ansatz that the full distribution function is invariant under permutations of the oscillator indices. (This invariance assumption naturally excludes distributions corresponding to states with spontaneously broken symmetries, as may be expected in the study of chimera states, for example.) In this case, we can ensure that each term in the sum in Eq.~\eqref{twopart} is identical, and thus
\begin{equation}
\int  \rho(\theta_i,\omega_i,t)\Bigg\rvert_{\substack{\theta_k= \theta \\ \omega_k=\omega \\\theta_l= \tilde{\theta} \\ \omega_l=\tilde{\omega}}}~\overline{d\theta_{kl}d\omega_{kl}} = \frac{1}{N(N-1)}\rho^{(2)}(\theta, \omega, \tilde{\theta}, \tilde{\omega}, t). 
\end{equation}
Plugging this into the interaction term in Eq.~\eqref{rhoint} and changing the dummy integration variables from $\theta_j$ and $\omega_j$ to $\tilde{\theta}$ and $\tilde\omega$, we find
\begin{align}
\dot{\rho}^{(1)}_{\mathrm{interaction}} &= -\sum_k \int  \frac{\partial}{\partial \theta_k}\left[ J\sum_j A_{kj} \sin\left(\theta_j-\theta_k)\right)\rho(\theta_i,\omega_i,t)\right]\Bigg\rvert_{\substack{\theta_k= \theta \\ \omega_k=\omega \\ \theta_j=\tilde\theta \\ \omega_k=\tilde\omega}}~\overline{d\theta_{kj}d\omega_{kj}}d\tilde{\theta}d\tilde{\omega} \\
&= \frac{\partial}{\partial \theta} \int \left( K \sin(\tilde{\theta}-\theta)\rho^{(2)}(\theta,\omega,\tilde{\theta},\tilde{\omega},t) \right) d\tilde{\theta}d\tilde{\omega},
\end{align}
where $K=\sum_{jk} \frac{J A_{jk}}{N(N-1)}$. Thus, the evolution of the one-particle distribution function depends on the two-particle distribution function
\begin{equation}
\frac{\partial}{\partial t}\rho^{(1)}(\theta,\omega,t) = -\frac{\partial}{\partial \theta}\left( \omega \rho^{(1)}(\theta,\omega,t) + K\int \sin(\tilde{\theta}-\theta)\rho^{(2)}(\theta,\omega,\tilde{\theta},\tilde{\omega},t) d\tilde{\theta}d\tilde{\omega}\right). \label{bbgky}
\end{equation}

Equation \eqref{bbgky} is a slight generalization of the BBGKY hierarchy in statistical mechanics, accounting for our added statistical description of the natural frequencies of the oscillators. To complete the derivation, we must introduce a closure approximation of independence in the two-particle distribution function,
\begin{equation}
\rho^{(2)}(\theta,\omega,\tilde{\theta},\tilde{\omega},t) \approx \rho^{(1)}(\theta,\omega,t) \rho^{(1)}(\tilde{\theta},\tilde{\omega},t).\label{ansatz}
\end{equation}
This independence approximation effectively ignores correlations between the phases of interacting oscillators and is thus a mean field approximation. We can expect that this approximation may be justified in the large $N$ limit because the correlations between individual pairs of oscillators will be drowned out by the interactions with all the other oscillators in the system. Introducing Eq.~\eqref{ansatz} into Eq.~\eqref{bbgky}, we arrive at the nonlinear Fokker-Planck equation in Eq.~\eqref{nlfp} when we relabel $\rho^{(1)}=\hat{\rho}$.

Concerning the validity of the independence closure approximation in Eq.~\eqref{ansatz}, we should note that such mean-field approximations are known to fail in low-dimensional systems when correlation length scales diverge and fluctuations become relevant around second-order phase transitions in, for example, the one-dimensional Ising model. Renormalization group approaches may be utilized in such systems. In the context of oscillator networks, the rank of the adjacency matrix plays a role similar to the spatial dimension in magnetic systems, as it determines the number of pairwise interactions that each oscillator experiences.  Thus, there may be cause for concern in applying this closure approximation to low-rank systems.

\end{document}